\colorlet{Blue}{blue}
\colorlet{Orange}{orange}
\definecolor{ForestGreen}{rgb}{0.13, 0.55, 0.13}
\newtheorem{theorem}{Theorem}[section]
\newtheorem{lemma}[theorem]{Lemma}
\theoremstyle{definition}
\newtheorem{definition}{Definition}[section]
\newcommand{\cont}[2]{#1 \mathbin{\triangleleft} #2}
\DeclareMathOperator{\Mon}{Mon}
\DeclareMathOperator{\Mnd}{Mnd}
\DeclareMathOperator{\Endo}{Endo}
\DeclareMathOperator{\Poly}{Poly}
\DeclareMathOperator{\Nat}{Nat}
\DeclareMathOperator{\Fin}{Fin}
\DeclareMathOperator{\Id}{Id}
\DeclareMathOperator{\ICMS}{ICMS}
\DeclareMathOperator{\id}{id}
\DeclareMathOperator{\St}{St}
\DeclareMathOperator{\refl}{refl}
\DeclareMathOperator{\Wr}{Wr}
\newcommand{\letin}[2]{\mathsf{let\:} #1 \mathsf{\:in\:} #2}
\newcommand{\ISt}[1]{\St^I_{#1}}
\newcommand{\catname}[1]{\mathsf{#1}}
\newcommand{\Set}{\catname{Set}}
\newcommand{\ICMSCat}{\catname{ICMSCat}} % There must be a better way
\newcommand{\Cat}{\catname{Cat}}
\newcommand{\SetI}{\Set^I}
\newcommand{\IC}{\catname{IC}}
\newcommand{\ICI}{\IC_I}
\newcommand{\C}{\mathcal{C}}
\newcommand{\D}{\mathcal{D}}
\newcommand{\ext}[1]{\llbracket{#1}\rrbracket}
\newcommand{\x}{\times}
\newcommand{\ox}{\mathbin{\otimes}}
\newcommand{\I}{\mathsf{I}}
\newcommand{\tosim}{\xrightarrow{\sim}}
\newcommand{\toI}{\to^I}
\newcommand{\upa}{\mathbin\uparrow}
\newcommand{\ula}{\mathbin\nwarrow}
\newcommand{\ura}{\mathbin\nearrow}
\newcommand{\iupa}[1]{\mathbin{\uparrow_{#1}}}
\newcommand{\iula}[1]{\mathbin{\nwarrow_{#1}}}
\newcommand{\iura}[1]{\mathbin{\nearrow_{#1}}}
\newcommand{\Peidx}{P\mathsf{e}_{\equiv}}
\newcommand{\Pibub}{\bullet^P}
\newcommand{\eP}{\e^P}
\newcommand{\smoosh}[1]{\overline{#1}}
\newcommand{\sbar}{\overline{s}}
\newcommand{\inv}[1]{{#1}^{-1}}
\DeclareMathOperator{\lop}{\mathsf{left}}
\DeclareMathOperator{\rop}{\mathsf{right}}
\DeclareMathOperator{\inl}{\mathsf{inl}}
\DeclareMathOperator{\inr}{\mathsf{inr}}
\DeclareMathOperator{\fst}{\mathsf{fst}}
\DeclareMathOperator{\snd}{\mathsf{snd}}
\newcommand{\eqdef}{\mathbin{\overset{\mathsf{def}}{=}}}
\newcommand{\semi}{\mathbin{\pmb ;}}
\newcommand{\pair}[2]{(#1, #2)}
\newcommand{\triple}[3]{(#1 , #2 , #3)}
\newcommand{\Unit}{\mathbb 1}
\newcommand{\fcart}{\sigma}
\newcommand{\fvert}{\pi}
\newcommand{\Bool}{\mathsf{Bool}}
\newcommand{\true}{\mathsf{true}}
\newcommand{\false}{\mathsf{false}}
\DeclareMathOperator{\data}{data\ }
\DeclareMathOperator{\where}{\ where}
\DeclareMathOperator{\var}{var}
\DeclareMathOperator{\app}{app}
\DeclareMathOperator{\lam}{lam}
\newcommand{\mact}{\mathbin{\blacktriangleright}}
\newcommand{\IWr}{\Wr^{\mact}}
\newlength{\alphabetheight}
\newlength{\alphabetdepth}
\newcommand{\agdahen}{%
  \begingroup
  \raisebox{-\alphabetdepth}
  {\includegraphics[height=\alphabetheight+\alphabetdepth]{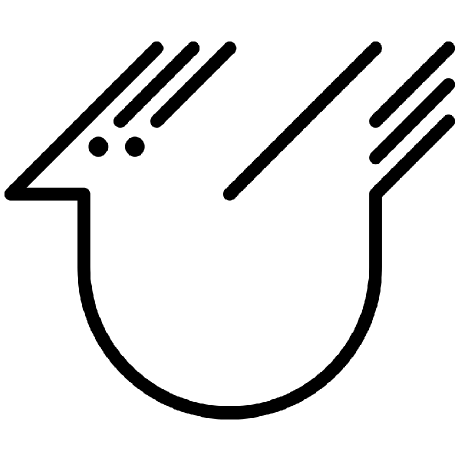}}%
  \endgroup
}
\DeclareRobustCommand{\refagda}[1]{%
  \href{\agdaurl/#1}{\agdahen}%
}
\newcommand{\refagdaheading}[1]{%
  \texorpdfstring{\refagda{#1}}{}%
}
\newenvironment{agdaproof}[1]{%
  \begin{proof}[Proof \refagda{#1}]
}{%
\end{proof}
}
\DeclareSymbolFont{sfoperators}{OT1}{cmss}{m}{n}
\DeclareSymbolFontAlphabet{\mathsf}{sfoperators}
\def\operator@font{\mathgroup\symsfoperators}
\newcommand{\ie}{\emph{i.e.\@}}
\newcommand{\eg}{\emph{e.g.\@}}
\newcommand{\Eg}{\emph{E.g.\@}}
\newacronym{itt}{ITT}{Intensional Type Theory}
\newacronym{mltt}{MLTT}{Martin-Löf Type Theory}
\newacronym{lccc}{LCCC}{Locally Cartesian Closed Category}
\newacronym{uip}{UIP}{Uniqueness of Identity Proofs}
\newacronym{hit}{HIT}{Higher Inductive Type}
\newacronym{hott}{HoTT}{Homotopy Type Theory}
\newcommand{\iprod}[1]{\prod_{\{#1\}}}
\newcommand{\iarg}[1]{_{#1}}
\newcommand{\comp}{\semi}
\newcommand{\idI}{\id^\I}
\newcommand{\compI}{\mathbin{\semi^I}}
\newcommand{\eps}{\varepsilon}
\newcommand{\mapF}{F}
\newcommand{\mapFi}{\mapF^{-1}}
\newcommand{\e}{\mathsf{e}}
\title{Monoid Structures on Indexed Containers}
\author{
Michele De Pascalis   
\institute{Tallinn University of Technology, Estonia}
\email{michde@taltech.ee}
\and
Tarmo Uustalu
\institute{Reykjavik University, Iceland}
\institute{Tallinn University of Technology, Estonia}
\email{tarmo@ru.is}
\and
Niccol{\`o} Veltr{\`i}
\institute{Tallinn University of Technology, Estonia}
\email{niccolo@cs.ioc.ee}
}
\begin{document}

\maketitle

%TODO mandatory: add short abstract of the document
\begin{abstract}
  Containers represent a wide class of type constructions that are relevant for
functional programming and (co)inductive reasoning. Indexed containers
generalize this notion to better fit the scope of dependently typed programming.
When interpreting types to be sets, a container describes an endofunctor on the
category of sets while an \(I\)-indexed container describes an endofunctor on
the category \(\SetI\) of \(I\)-indexed families of sets.

We consider the monoidal structure on the category of \(I\)-indexed containers
whose tensor product of containers describes the
composition of the respective induced endofunctors. We then give a combinatorial
characterization of monoids in this monoidal category, and we show how these
monoids correspond precisely to monads on the induced endofunctors on \(\SetI\).
Lastly, we conclude by presenting some examples of monads on \(\SetI\) that fall
under our characterization, including the product of two monads, indexed
variants of the state and the writer monads and an example of a free monad. The technical results of this work are accompanied
by a formalization in the proof assistant Cubical Agda.

\end{abstract}

\section{Introduction}%
\label{sec:Introduction}

%\subsection{Containers and polynomials}

Abbott et al.~\cite{Abbott2005} introduced the concept of (simple,
non-indexed) containers as a representation for a wide class of
parameterized datatypes, corresponding to strictly positive types, a
class of set endofunctors. The subsequent indexed generalization of
containers by Altenkirch et al.~\cite{Altenkirch2015} covers a class
of functors $\Set^I \to \Set^J$ between families of sets for $I$, $J$
some given sets of indices.

An example of datatype arising from an indexed container (in which
the index sets $I$ and $J$ are both equal to the set of natural numbers $\Nat$)
is the type of
vectors with bounded size.  In proof assistants such as Agda, the type
$\mathsf{Vec}\;A\;n$ of vectors with length $\le n$ can be implemented
as an inductive type:
\begin{align*}
  \textcolor{Orange}\data \: & \textcolor{Blue}{\mathsf{Vec}}\; (A :
  \textcolor{Blue}\Set) : \textcolor{Blue}\Nat \to
  \textcolor{Blue}\Set \: \textcolor{Orange}\where \\
          & \textcolor{ForestGreen}{\mathsf{nil}} : \{n : \textcolor{Blue}\Nat\} \to \textcolor{Blue}{\mathsf{Vec}}\;A\;n \\
  & \textcolor{ForestGreen}{\mathsf{cons}} : \{n : \textcolor{Blue}\Nat\} \to A \to \textcolor{Blue}{\mathsf{Vec}}\;A\;n \to \textcolor{Blue}{\mathsf{Vec}}\;A\;(1+n)
\end{align*}
But this can be equivalently defined as the type $\sum_{k \le n}
\Fin~k \to A$, whose elements are pairs of a number $k$ smaller than
$n$ (the actual length of the vector) and a function from $\Fin~k$
(the type of numbers $< k$) to $A$, i.e. a $k$-tuple of elements of $A$.
This type arises from a $\Nat$-indexed container with shapes at index
$n$ being the numbers $k \le n$ and positions in shape $k$ being
elements of $\Fin~k$.

Indexed containers are essentially a notational variant of the
(generalized) polynomials of Gambino and Hyland \cite{Gambino2004}, a
concept definable for any locally Cartesian closed category $\C$. A
polynomial from $I$ to $J$ interprets into a generalized polynomial
functor $\C/I \to \C/J$ between slice categories, where $\C := \Set$
is the archetypical choice of $\C$. Polynomials in this sense are a
variation on Joyal's \cite{Joyal1986} species, which interpret into
analytic functors.  Weber \cite{Web15} showed that the locally
Cartesian closed category can be replaced by a category with pullbacks.

The container and polynomial view of the same concept complement each other. The
polynomial view has advantages for theoretical developments, in particular
generalizations. For working with concrete examples or formalizing the theory
and examples in a type-theoretical proof assistant, the container view tends to
be smoother as argued e.g.\ by Finster et al.~\cite{Finster21}. 

% Another difference, stemming from the different applications considered, is that
% researchers working on containers are typically interested in general morphisms
% between containers~\cite{...} whereas the literature on polynomials focuses on
% Cartesian morphisms, i.e., those morphisms between polynomials whose
% interpretations into strong natural transformations are Cartesian
% \cite{Gambino2009}. (All polynomial functors are Cartesian.)

%% Michele: this is not really the case, cf. Finster21 sect. 2.2, 
%% which is arguably container literature but focuses on caresian morphisms.

Monads as endofunctors with a monoid structure are ubiquitous in both
type theory and category theory, with the type-theoretical (dependently
typed programming) uses most often coming from programming language
theory. Polynomial functors with a monad structure, called polynomial
monads, are an important special case and their in-depth study was
pioneered by Gambino and Kock~\cite{Gambino2009}.

Works on polynomials and containers also show variety in what they choose to
be the appropriate notion of polynomial/container morphism. Abbott et
al.~\cite{Abbott2005}, Altenkirch et
al.~\cite{Altenkirch2015} and Ahman et al.~\cite{ACU14,AHMAN2013} consider
general natural
transformations between the corresponding functors, while \eg{} Finster et
al.~\cite{Finster21} only target Cartesian strong natural transformations, as
noted in their Section 2.2. Gambino and Kock~\cite{Gambino2009} describe general natural transformations when defining polynomials, but only consider Cartesian
monads, \ie{} monads whose unit and multiplication are Cartesian natural
transformations.

Cartesian polynomial monads are of particular interest in the
meta-theory of type theory because they arise in the semantics of
type theory. They help organize Awodey's \cite{Awo18} natural models
of type theory, as demonstrated by Awodey and Newstead
\cite{AN18,New24} and Aberl\'e and Spivak \cite{AS24}.

Being a monad is not a mere property, there can be several choices of
unit and multiplication that make an endofunctor into a monad. Hence
it is natural to wonder whether, for container monads, these choices
can be beneficially analyzed in terms of their container
representation. Uustalu~\cite{Uustalu2017} answered this question for
non-indexed containers by unpacking what a monoid structure on a
container is by definition and transforming the result to a somewhat
simpler form.

\paragraph*{Our contribution}%
\label{sub:Scope of this article}

The contribution of this work is to extend the work of Uustalu~\cite{Uustalu2017} to indexed
containers and to provide examples of this characterization.

In detail, we outline the monoidal structure on indexed containers
that corresponds to composition of their extents. We then generalize
Uustalu's characterization to account for indexing by appropriately
adjusting the data and equations and prove that the resulting
structures correspond precisely to monad structures on indexed
container functors.

Lastly, we show how this characterization can be used practically, by
employing it to describe the product of two indexed container monads,
as well as indexed variants of the well-known state monad and writer
monads \cite{AU13} and, as an example of a free monad, the free monad which, when
applied to the family $\Fin$, gives the family of well-scoped
$\lambda$-terms.

We formalize this development in Cubical Agda.

\paragraph*{Related work}%
%\label{sub:Related Work}
Gambino and Kock~\cite{Gambino2009} studied the bicategorical
structure of (generalized) polynomials in arbitrary locally Cartesian closed categories: given an LCCC \(\C\) and objects $I, J$ of $\C$, their category \(\Poly_\C(I , J)\)
embeds fully and faithfully into the functor category \(\left[ \C / I , \C /
J \right]_\mathrm{strong}\) where the strengths are wrt.\ the canonical actions of $\C$ with its product monoidal structure on $\C/I$ and $\C/J$. Furthermore, one can define a bicategory whose objects are objects
in \(\C\) and whose \(\hom\)-categories are given by \(\hom(I , J) := \Poly_\C(I ,
J)\). Then, the previously described family of embeddings comes together into a
locally fully faithful 2-functor into the sub-2-category of \(\Cat\) having
\(\C\)-slices as 0-cells, polynomial functors as 1-cells, and strong natural
transformations as 2-cells.

In our work, we consider indexed containers where both indexings are
over the same fixed set \(I\). This restriction corresponds in
\cite{Gambino2009} to considering the full sub-bicategory of
\(\Poly_\C\) whose only 0-cell is \(I\), i.e., a monoidal category
which is analogous to the one studied in this paper.  It is also worth
noting that Gambino and Kock considered polynomial monads whose unit and
multiplication are Cartesian,
while our characterization is of general monads on extents of containers.

Proof assistant formalizations, mostly in Agda, of different chapters
of the theory of containers and polynomials have been carried out by
various groups of researchers. Ahman et al.~\cite{ACU14} formalized
directed containers (comonoids in containers). Finster et
al.~\cite{Finster21} formalized dependent polynomials/indexed
containers. Damato et al.~\cite{DAL25} formalized initial algebra and
final coalgebra constructions in containers. Joram and Veltri \cite{JV25}
developed a version of containers with symmetries, called action
containers.
Purdy and Damato \cite{PD25} have recently come up with a combinatorial characterization of distributive laws between monads arising from (non-indexed) containers.

\paragraph*{Structure of the paper}%
\label{sub:Paper structure}

In \Cref{sec:Indexed Containers} we review the category of indexed containers, that we call $\ICI$ (with $I$ being the indexing set), as well as its composition monoidal structure. In \Cref{sec:icmon} we recollect the notion of monoid internal to a monoidal category and provide a combinatorial
characterization of monoids in $\ICI$. \Cref{sec:Examples} collects a number of examples of interest in dependently-typed functional programming, all introduced using our combinatorial approach:
closure under Cartesian products, indexed variants of the state and the writer monads, and untyped $\lambda$-terms as an instance of a particular free monad.
We conclude in \Cref{sec:Conclusions}, where we discuss some ideas for future work.

\paragraph*{Formalization}%
\label{sub:Formalization}

This work has been formalized in the Cubical Agda proof assistant.
The choice of Cubical Agda is motivated by its support for functional
extensionality, which is employed in our development, crucially in the proof of fully-faithfulness of the interpretation in endofunctors and in proving the combinatorial monoid equations for the examples in \Cref{sec:Examples}.
Moreover, the use of Cubical Agda readily gives us the possibility of extending our work from sets to groupoids and higher-dimensional types.

\noindent The code is freely available at
\url{https://github.com/mikidep/indexed-monads/tree/lsfa2025}, or as clickable
HTML at \url{\agdaurl/Everything.html}. For the reader's
convenience, the formalized proofs and examples are decorated with a direct
hyperlink, in the form of the Agda logo (\agdahen).

\paragraph*{Notational conventions}%
\label{sub:Conventions}

We write composition of functions, and more generally morphisms in
categories, in diagrammatic order: the composition of morphisms \(A
\xrightarrow{f} B \xrightarrow{g} C\) is denoted by \(f \semi g\).
We use curly braces for implicit arguments of dependent functions.
When we want to make an implicit argument visible, we add it as a subscript.
For example, given $f : \iprod{x : X} \prod_{y : Y} Z ~x ~y$, we denote its application to the implicit argument $x : X$ and the explicit argument $y : Y$ by $f \iarg{x} ~y : Z ~x ~y$.
An underscore appearing in the left-hand side of a function definition represents an argument that is not used in the definition, and therefore does not appear in the right-hand side.
\(\Set\) denotes the usual category of sets and functions.
In Cubical Agda, a type $X$ is a set if, for all $x,y : X$, the equality type $x \equiv y$ is a proposition, i.e. for all $p,q : x \equiv y$, we have $p \equiv q$.
Given a set
\(I\), \(\SetI\) denotes the category of presheaves over \(I\) considered
as a discrete category. The objects of this category are families
of sets, i.e., functions of type \(I \to \Set\). The morphisms from
$X$ to $Y$ are functions of type $\iprod{i : I} X~i \to Y~i$. We write
$X \toI Y$ for the homset $\SetI(X, Y)$.
Identity in $\Set^I$ is denoted $\idI$ while composition is $\compI$.
That is, we define $(f \compI g)_i := f_i \comp g_i$ for
$f : X \toI Y$, $g : Y \toI Z$.
The singleton set is denoted \(\Unit\) with \(\star\) as unique inhabitant.

\section{Indexed containers}%
\label{sec:Indexed Containers}

We are interested in endofunctors on \(\SetI\) which are describable as the
extents of \emph{indexed containers} in the sense of Altenkirch et
al.~\cite{Altenkirch2015}. Note that in general the notion of indexed container
is parameterized in two indexing sets, however, in this work we only deal with
polynomials describing endofunctors, for which the two parameters have to
coincide. We shall recap the relevant definitions here.

\begin{definition}[Indexed container, cf.\ \textsf{ICont*} in \cite{Altenkirch2015}
    \refagda{IndexedContainer.html\#266}]
  An \emph{indexed container} (denoted \(\cont SP\)) is composed of an indexed
  set \(S : I \to \Set\), together with a family of indexed sets over it, \(P : \iprod{i : I} S~i \to I \to \Set\).
\end{definition}

\begin{definition}[Extent of an indexed container
  \refagda{IndexedContainer.html\#719}]
  Let \(\cont SP\) be an indexed container. We define its \emph{extent}
  \(\ext{\cont SP}\) as the endofunctor on \(\SetI\) defined on objects by
  \begin{equation*}
    \ext{\cont SP}\; X\; i := \sum_{s : S\; i} P \iarg{i}~ s~ \toI X
  \end{equation*}
and on maps by 
\begin{align*}
     & \ext{\cont SP}  : (X \toI Y) \to 
       (\ext{\cont SP}\; X \toI \ext{\cont SP}\; Y) \\
     & (\ext{\cont SP}\; f)_i~(s , v) := (s , v \compI f)
\end{align*}
\end{definition}

\begin{definition}[Morphism of indexed
  containers \refagda{IndexedContainer.html\#985}]
  \label{def:container-morphism}
  Let \(\cont SP\) and \(\cont {S'}{P'}\) be indexed containers. A morphism between them is a function of type $\iprod{i:I} \prod_{s : S\; i} \ext{\cont{S'}{P'}} (P \iarg{i}~ s)~ i$.
\end{definition}
  The type of container morphisms \(\ICI(\cont{S}{P}, \cont{S'}{P'})\) can be unfolded as follows:
  \begin{align*}
    & \iprod{i:I} \prod_{s : S\; i} \ext{\cont{S'}{P'}} (P \iarg{i}~ s)~ i
    \\
    & = \iprod{i:I} \prod_{s : S~ i}
    \sum_{s' : S'~i} P' \iarg{i}~ s' \toI P \iarg{i}~ s
    \\
    & \simeq \sum_{\sigma : S \toI S'} \iprod{i:I} \prod_{s : S~ i}
    P' \iarg{i}~ (\sigma~ i~ s) \toI P \iarg{i}~ s
  \end{align*}
  Given \(f : \ICI(\cont{S}{P}, \cont{S'}{P'})\), we denote as \(\fcart~ f\) and
  \(\fvert~ f\) the two components of the image of $f$ under the last equivalence above.

Indexed containers and their morphisms form a category $\ICI$ with identity
and composition defined as follows:
\begin{align*}
  & (\id_{\cont SP})_i~ s := \pair{s}{\id_{P_i s}}
  \\
  & (\alpha \semi \beta)_i~ s := \letin{
    \\
    & \qquad\pair{s'}{v'} \leftarrow \alpha_i~ s
    \\
    & \qquad\pair{s''}{v''} \leftarrow \beta_i~ s'
    \\
    & \quad
  }{\pair{s''}{v'' \compI v'}}
\end{align*}
The extent operation \(\ext{-}\) extends to a functor from \(\ICI\) to
\(\Endo(\SetI)\), the category with objects given by endofunctors on \(\SetI\)
and morphisms by natural transformations. The following lemma was already proved
in \cite{Altenkirch2015} for non-indexed containers.

\begin{lemma}
  \label{lem:ext-ff}
  \(\ext{-}\) is full and faithful.
\end{lemma}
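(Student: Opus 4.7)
The plan is to adapt the standard Yoneda-style argument from the non-indexed case. Unfolding the functor $\ext{-}$ on morphisms, a container morphism $f : \ICI(\cont{S}{P}, \cont{S'}{P'})$ maps to the natural transformation whose component at $X$ and $i$ sends $(s, v)$ to $(\fcart\, f\, i\, s, \fvert\, f\, i\, s \compI v)$. The key observation is that $f$ can be recovered from $\ext{f}$ by evaluating it at the representing family $X := P \iarg{i}~s$ on the pair $(s, \idI)$.

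Concretely, I would define a candidate inverse $\Phi$ on hom-sets by
\[
  \Phi(\alpha)_i~ s := \alpha_{P \iarg{i}~s, i}(s, \idI_{P \iarg{i}~s}),
\]
for $\alpha$ a natural transformation $\ext{\cont{S}{P}} \to \ext{\cont{S'}{P'}}$, and verify that $\Phi \circ \ext{-} = \id$ (faithfulness) and $\ext{-} \circ \Phi = \id$ (fullness).

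For faithfulness, a direct unfolding of the action of $\ext{-}$ on morphisms gives $(\ext{f})_{P \iarg{i}~s, i}(s, \idI) = (\fcart\, f\, i\, s, \fvert\, f\, i\, s \compI \idI) = f_i~ s$, so $\Phi(\ext{f}) = f$ on the nose. For fullness, given $\alpha$ we must show $\ext{\Phi(\alpha)}_{X,i}(s, v) = \alpha_{X,i}(s, v)$ for every $X$, $i$ and $(s, v)$. This follows from naturality of $\alpha$ at the morphism $v : P \iarg{i}~s \toI X$, applied to $(s, \idI)$: the naturality square yields
\[
  \alpha_{X, i}(s, v) = \alpha_{X, i}\bigl(\ext{\cont{S}{P}}(v)_i(s, \idI)\bigr) = \ext{\cont{S'}{P'}}(v)_i\bigl(\alpha_{P \iarg{i}~s, i}(s, \idI)\bigr),
\]
and the right-hand side, by definition of the extent on morphisms, equals $(\fcart\, \Phi(\alpha)\, i\, s,\, \fvert\, \Phi(\alpha)\, i\, s \compI v) = \ext{\Phi(\alpha)}_{X, i}(s, v)$.

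The main obstacle is purely bookkeeping: tracking implicit indices and matching the unfolded definition of the extent on morphisms with the components extracted from $\alpha$. The mathematical content is just the Yoneda lemma applied to the representing family $P \iarg{i}~ s$. In Cubical Agda, function extensionality will be required to equate natural transformations pointwise at each $X$ and $i$, but nothing deeper is needed.
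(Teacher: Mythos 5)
Your proof is correct, but it takes a different route from the one written in the paper. The paper's stated proof is a calculational chain in (co)end calculus: it rewrites $\Nat(\ext{\cont SP},\ext{\cont{S'}{P'}})$ as an end, pulls the products $\prod_{i}\prod_{s}$ out by dependent currying and Fubini, and then applies the Yoneda lemma for the representable $\SetI(P\iarg{i}\,s,-)$ to land exactly on the type of container morphisms. Your argument instead constructs the inverse $\Phi(\alpha)_i\,s := \alpha_{P\iarg{i}s,\,i}(s,\idI)$ explicitly and checks the two round trips by hand, using naturality of $\alpha$ at $v : P\iarg{i}\,s \toI X$ for fullness; this is precisely the element-level unwinding of that same Yoneda step, and all the details check out ($(s,v) = \ext{\cont SP}(v)_i(s,\idI)$, so naturality gives $\alpha_{X,i}(s,v) = \ext{\cont{S'}{P'}}(v)_i(\alpha_{P\iarg{i}s,\,i}(s,\idI)) = \ext{\Phi(\alpha)}_{X,i}(s,v)$). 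What the end-calculus version buys is a compact, reusable equivalence of types with each step justified by a named general principle; what your direct version buys is machine-checkability without importing any end calculus, and indeed the paper explicitly notes that its formalization carries out "a more direct proof" --- which is essentially yours. Your closing remark about function extensionality also matches the paper's observation that funext is the crucial ingredient in formalizing fully-faithfulness.
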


\begin{agdaproof}{IndexedSetContainer.html\#3466}
  The proof employs (co)end calculus. For a reference, see \cite{Loregian2021}.
  Also, note that ends over discrete categories coincide with products.
  A more direct proof is carried out in the formalization.
  \begin{align*}
    {} & \Nat\left(\ext{\cont SP} , \ext{\cont{S'}{P'}}\right)
    \\
    \simeq & \int_{X} \SetI\left(\ext{\cont SP}\; X , \ext{\cont{S'}{P'}}\; X\right)
    \tag{nat.\ transfs.\ as an end}
    \\
    \eqdef & \int_{X} \prod_{i:I} \Set\left(\ext{\cont SP}\; X\; i ,
    \ext{\cont{S'}{P'}}\; X\; i\right)
    \\
    \eqdef & \int_{X} \prod_{i:I} \Set\left(\sum_{s : S\; i} \SetI(P\; s , X),
    \ext{\cont{S'}{P'}}\; X\; i\right)
    \\
    \simeq & \int_{X} \prod_{i:I} \prod_{s : S\; i} \Set\left(\SetI(P\; s , X),
    \ext{\cont{S'}{P'}}\; X\; i\right)
    \tag{dependent (un)currying}
    \\
    \simeq & \prod_{i:I} \prod_{s : S\; i} \int_{X} \Set\left(\SetI(P\; s , X),
    \ext{\cont{S'}{P'}}\; X\; i\right)
    \tag{Fubini rule for ends}
    \\
    \simeq & \prod_{i:I} \prod_{s : S\; i} \Nat\left( \SetI(P\; s , -),
    \ext{\cont{S'}{P'}}\; (-)\; i\right)
    \tag{nat.\ transfs.\ as an end}
    \\
    \simeq & \prod_{i:I} \prod_{s : S\; i} \ext{\cont{S'}{P'}} (P\; s)\; i
    \tag{Yoneda}
  \end{align*}
\end{agdaproof}

\subsection{Monoidal categories and strong monoidal functors}%
\label{sub:Monoidal Categories}
% refs stolen from arXiv:0908.3347 
Monoidal categories \cite{MacLane1978,Joyal1993} are ubiquitous in category theory at
large, and in particular in its applications in computer science. We quickly
recap the notions we employ, and in doing so specify the notation and
conventions we adopt in the rest of the work.

A \emph{monoidal category} $(\C, \I, \ox)$ consists of a category \(\C\),
together with an object $\I : \C$,  a bifunctor \(\ox : \C \x \C \to \C\) and natural isomorphisms typed
\begin{align*}
  \lambda_X &: \I \ox X \tosim X \\
  \rho_X &: X \ox \I \tosim X \\
  \alpha_{X,Y,Z} &: (X \ox Y) \ox Z \tosim X \ox (Y \ox Z)
\end{align*}

For these, the following two diagrams have to commute for any choice of \(X, Y,
Z, W\):
%% TRIANGLE EQUATION
% https://q.uiver.app/#q=WzAsMyxbMCwwLCIoWCBcXG94IFxcSSkgXFxveCBZIl0sWzIsMCwiWCBcXG94IChcXEkgXFxveCBZKSJdLFsxLDEsIlggXFxveCBZIl0sWzAsMSwiXFxhbHBoYV97WCwgXFxJLCBZfSJdLFsxLDIsIlxcaWRfWCBcXG94IFxcbGFtYmRhX1kiXSxbMCwyLCJcXHJob19YIFxcb3ggXFxpZF9ZIiwyXV0=&macro_url=https%3A%2F%2Fgist.github.com%2Fmikidep%2Fd1bc17f8807cdf32405738530cb7fe17%2Fraw%2Ffdac14f9a47d3ee969bd0e5082a7370bc479a84e%2Fmacros.tex
\[\tag{triangle eq.}\begin{tikzcd}
	{(X \ox \I) \ox Y} && {X \ox (\I \ox Y)} \\
	& {X \ox Y}
	\arrow["{\alpha_{X, \I, Y}}", from=1-1, to=1-3]
	\arrow["{\rho_X \ox \id_Y}"', from=1-1, to=2-2]
	\arrow["{\id_X \ox \lambda_Y}", from=1-3, to=2-2]
\end{tikzcd}\]

%% PENTAGON EQUATION
% https://q.uiver.app/#q=WzAsNSxbMCwxLCIoKFcgXFxveCBYKSBcXG94IFkpIFxcb3ggWiJdLFsxLDAsIihXIFxcb3ggWCkgXFxveCAoWSBcXG94IFopIl0sWzIsMSwiVyBcXG94IChYIFxcb3ggKFkgXFxveCBaKSkiXSxbMiwyLCJXIFxcb3ggKChYIFxcb3ggWSkgXFxveCBaKSJdLFswLDIsIihXIFxcb3ggKFggXFxveCBZKSkgXFxveCBaIl0sWzAsMSwiXFxhbHBoYV97VyBcXG94IFgsIFksIFp9Il0sWzEsMiwiXFxhbHBoYV97VywgWCwgWSBcXG94IFp9Il0sWzMsMiwiXFxpZF9XIFxcb3ggXFxhbHBoYV97WCwgWSwgWn0iLDJdLFswLDQsIlxcYWxwaGFfe1csIFgsIFl9IFxcb3ggXFxpZF9aIiwyXSxbNCwzLCJcXGFscGhhX3tXLCBYIFxcb3ggWSwgWn0iLDJdXQ==&macro_url=https%3A%2F%2Fgist.github.com%2Fmikidep%2Fd1bc17f8807cdf32405738530cb7fe17%2Fraw%2Ffdac14f9a47d3ee969bd0e5082a7370bc479a84e%2Fmacros.tex
\[\tag{pentagon eq.}\begin{tikzcd}[cramped,column sep=0em]
	& {(W \ox X) \ox (Y \ox Z)} \\
	{((W \ox X) \ox Y) \ox Z} && {W \ox (X \ox (Y \ox Z))} \\
	{(W \ox (X \ox Y)) \ox Z} && {W \ox ((X \ox Y) \ox Z)}
	\arrow["{\alpha_{W, X, Y \ox Z}}", from=1-2, to=2-3]
	\arrow["{\alpha_{W \ox X, Y, Z}}", from=2-1, to=1-2]
	\arrow["{\alpha_{W, X, Y} \ox \id_Z}"', from=2-1, to=3-1]
	\arrow["{\alpha_{W, X \ox Y, Z}}"', from=3-1, to=3-3]
	\arrow["{\id_W \ox \alpha_{X, Y, Z}}"', from=3-3, to=2-3]
\end{tikzcd}\]

%They are called respectively the triangle and pentagon equation.

Given two monoidal categories \((\C, \I, \ox)\) and \((\D, \I', \ox')\), a
\emph{(strong) monoidal functor} consists of a functor \(F : \C \to \D\),
equipped with 
\begin{itemize}
  \item an isomorphism \(\eps : \I' \tosim F~ \I\),
  \item a natural isomorphism \(\psi_{X, Y} : F~X \ox' F~Y \tosim F~(X \ox Y)\).
\end{itemize}

\subsection{The composition monoidal structure of indexed containers}%
\label{sub:mon_struct_ind_cont}

\(\ICI\) features a
composition monoidal category structure, which we shall describe here explicitly.

\begin{definition}[Identity container
  \refagda{IndexedContainer.MonoidalCategory.html\#186}]
  The \emph{identity container} $\I$ is defined as
  \(\cont{S}{P}\), where:
  \begin{align*}
    & S~ i := \Unit \\
    & P \iarg{i} ~{\star} ~j := i \equiv j\
  \end{align*} 
\end{definition}

\begin{definition}[Composite container
  \refagda{IndexedContainer.MonoidalCategory.html\#299}]
  Let \(\cont{S^0}{P^0}\) and \(\cont{S^1}{P^1}\) be indexed containers
  over \(I\). Their composition \((\cont{S^0}{P^0}) \ox (\cont{S^1}{P^1})\) is defined as \(\cont{S}{P}\) where
  \begin{align*}
    & S~i := \ext{\cont{S^0}{P^0}}~ S^1~i
    \\
    & P \iarg{i}~ (s , s')~ k := \sum_{j : I} \sum_{p' : P^0 \iarg{i}~ s~ j} P^1 \iarg{j}~ (s'~
    p')~ k
  \end{align*}
\end{definition}

\begin{lemma}
  \label{lem:ic-moncat}
  \((\ICI, \I, \ox)\) is a monoidal category.
\end{lemma}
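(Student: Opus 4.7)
The plan is to exhibit each of the required pieces of structure for $(\ICI, \I, \ox)$ to be monoidal, and to discharge the coherence equations by appealing to \Cref{lem:ext-ff}. Concretely, I would first verify that $\ox$ extends to a bifunctor, then construct the unitors $\lambda, \rho$ and associator $\alpha$ as container morphisms, and finally establish the triangle and pentagon equations by transporting them to the (strict) composition monoidal structure on $\Endo(\SetI)$ via the extent functor.

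The bifunctoriality of $\ox$ is a routine check, with the action on morphisms $f \ox g$ being read off by viewing the shapes of a composite $\cont{S^0}{P^0} \ox \cont{S^1}{P^1}$ as the extent $\ext{\cont{S^0}{P^0}}\,S^1$. For the left unitor, note that a shape of $\I \ox \cont{S}{P}$ at $i$ is a function $s' : \iprod{j} (i \equiv j) \to S~j$, which by path induction is equivalent to $s'~i~\refl : S~i$; the morphism $\lambda_{\cont{S}{P}}$ is defined to use this identification on shapes, with positions handled by injection at the pair $(i, \refl)$ in the dependent sum of positions of $\I \ox \cont{S}{P}$. The right unitor $\rho$ is constructed symmetrically, and the associator $\alpha$ is obtained from associativity of dependent sums combined with the evident reindexing of positions. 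Each such morphism is shown to be an isomorphism by exhibiting an explicit inverse.

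For the coherence equations, I would exhibit natural isomorphisms $\eps : \Id_{\SetI} \tosim \ext{\I}$ (via path induction on positions in $\I$) and $\psi_{C, D} : \ext{C} \circ \ext{D} \tosim \ext{C \ox D}$ (by currying and reassociating dependent sums, as suggested by the unfolding of the extent of a composite container) and then verify that the extents of $\lambda, \rho, \alpha$, after conjugation by $\eps$ and $\psi$, coincide with the (identity) unitors and associator of the strict composition monoidal structure on $\Endo(\SetI)$. The triangle and pentagon in $\ICI$ then follow by faithfulness of $\ext{-}$ from the corresponding trivial identities in $\Endo(\SetI)$. The main technical obstacle is the careful bookkeeping needed to define $\psi$ cleanly and to establish its compatibility with the container-level structure; once this is in place, the coherence equations come essentially for free.
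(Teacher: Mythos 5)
Your construction of the data (bifunctoriality of \(\ox\), unitors and associator obtained by reassociating \(\Sigma\)-types and supplying \(\refl\)'s, with positions on the left unitor handled by injection at \((i,\refl)\)) matches what the paper does; the paper then simply verifies the triangle and pentagon directly in the formalization. Where you genuinely differ is in how the coherence equations are discharged: you propose to build the comparison isomorphisms \(\eps : \Id \tosim \ext{\I}\) and \(\psi_{C,D} : \ext{C} \circ \ext{D} \tosim \ext{C \ox D}\) first, check that conjugating \(\ext{\lambda}, \ext{\rho}, \ext{\alpha}\) by them yields the identity unitors and associator of the strict monoidal structure on \(\Endo(\SetI)\), and then reflect the trivially commuting triangle and pentagon back along the faithful functor \(\ext{-}\). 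This is sound and not circular: \Cref{lem:ext-ff} is established independently of any monoidal structure, and you only need the \emph{data} of \(\eps\), \(\psi\) and their compatibility squares with \(\lambda, \rho, \alpha\), not the monoidal functor axioms. In effect you are front-loading the content of \Cref{lem:ext-strmon} (which the paper proves afterwards) to obtain \Cref{lem:ic-moncat} as a corollary. What this buys is that the pentagon and triangle never have to be checked at the container level, where they involve heterogeneous equalities over reassociated \(\Sigma\)-types; what it costs is that the compatibility squares for \(\psi\) against the container-level \(\alpha\), \(\lambda\), \(\rho\) are exactly the same kind of reassociation-and-\(\refl\) bookkeeping, so the total effort is comparable. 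Both routes are correct.
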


\begin{agdaproof}{IndexedContainer.MonoidalCategory.html}
  Functoriality of \(\ox\) holds definitionally, while the unitors and associator
  (and their inverses) are defined by reassociating \(\Sigma\)-types and
  providing the necessary \(\refl\)'s when needed. All is carried out in detail
  in the formalization.
\end{agdaproof}

\begin{lemma}
  \label{lem:ext-strmon}
\(\ext{-}\) is a strong monoidal functor \((\ICI, \I, \ox) \to (\Endo(\SetI) ,
\Id , (\circ) )\).
\end{lemma}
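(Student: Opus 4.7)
The plan is to construct the unit constraint $\varepsilon : \Id \tosim \ext{\I}$ and the multiplication constraint $\psi_{C,C'} : \ext{C} \circ \ext{C'} \tosim \ext{C \ox C'}$ explicitly, and then verify naturality, invertibility, and the triangle and pentagon coherence laws.

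For $\varepsilon$: unfolding the definitions yields $\ext{\I}\,X\,i = \sum_{\star : \Unit}\, \iprod{j}\, (i \equiv j) \to X\,j$. I would define $(\varepsilon_X)_i\,x := (\star,\, \lambda\,j\,p.\,p_{\ast}\,x)$, where $p_{\ast}$ is transport along the path $p$, with inverse sending $(\star, v)$ to $v\,i\,\refl$. That these are mutually inverse is a direct application of based path induction together with function extensionality; naturality in $X$ holds because $\ext{\I}\,f$ is post-composition with $f$, which commutes with transport.

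For $\psi$: unfolding both sides yields
\begin{align*}
  (\ext{C} \circ \ext{C'})\,X\,i & = \sum_{s : S\,i}\, \iprod{j}\, P \iarg{i}\,s\,j \to \sum_{s' : S'\,j}\, P' \iarg{j}\,s' \toI X,
  \\
  \ext{C \ox C'}\,X\,i & = \sum_{s : S\,i}\, \sum_{v : P \iarg{i}\,s \toI S'}\, P^{\ox} \iarg{i}\,(s, v) \toI X.
\end{align*}
Applying the type-theoretic axiom of choice to distribute $\iprod{j}\,\prod_{p}(-)$ over $\Sigma_{s'}$ extracts the outer $v$, and uncurrying along the definition $P^{\ox}\iarg{i}\,(s, v)\,k = \sum_{j}\sum_{p} P'\iarg{j}\,(v\,j\,p)\,k$ collapses the remaining $\iprod{j}\,\prod_{p}$ into a single $\Pi$ over $P^{\ox}\iarg{i}\,(s, v)$. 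The composite of these canonical isomorphisms is $\psi_{C,C'}$, with re-currying as its inverse. Naturality in $X$ follows by the same post-composition reasoning as for $\varepsilon$; naturality in $C$ and $C'$ reduces to applying the same rearrangements to the $\fcart$ and $\fvert$ components of container morphisms.

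The main obstacle is the coherence. Since the monoidal structure on $\Endo(\SetI)$ is strict (composition is associative and unital on the nose), the triangle and pentagon laws reduce to pointwise equations between $\ext{\lambda_C}$, $\ext{\rho_C}$, $\ext{\alpha_{C,C',C''}}$ and the appropriate pastings of $\varepsilon$ and $\psi$ whiskered against identity natural transformations. Because, as noted in the proof of Lemma~\ref{lem:ic-moncat}, the structural isos of $\ICI$ are themselves built from the same $\Sigma/\Pi$-reassociations and insertions of $\refl$ that define $\varepsilon$ and $\psi$, each coherence equation unfolds to an equality of two composites of canonical type isomorphisms. These equalities follow by function extensionality, path induction, and $\eta$ for $\Sigma$ and $\Pi$; the bookkeeping is elementary but tedious, which is why the full verification is carried out in the Cubical Agda formalization.
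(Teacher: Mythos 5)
Your proposal is correct and matches the paper's approach: the paper's proof is a one-line deferral to the formalization, describing the argument as ``reassociations and identity type bookkeeping,'' and your explicit constructions of $\varepsilon$ (via transport/based path induction) and $\psi$ (via the type-theoretic axiom of choice and $\Sigma$/$\Pi$ reassociation) are precisely those reassociations spelled out. The only quibble is terminological: the coherence diagrams for a strong monoidal \emph{functor} are the associativity hexagon and the two unit squares, not the triangle and pentagon (which govern the monoidal \emph{categories}), though with the strict target $(\Endo(\SetI), \Id, \circ)$ they degenerate as you describe.
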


\begin{agdaproof}{IndexedContainer.Properties.html\#StrongMonoidal}
  Similarly to the above, this is all reassociations and identity type
  bookkeeping, please refer to the formalization.
\end{agdaproof}

\section{Monoids, indexed containers and monads}%
\label{sec:icmon}

\subsection{Monoid objects}%
\label{sub:Monoid objects}

We recall here the definition of a monoid object in a monoidal category.

\begin{definition}[Monoid in a monoidal category]
  \label{def:monoid}
  Let \((\C, \I, \ox)\) be a monoidal category. A \emph{monoid} \((X , \eta
  , \mu)\) in \(\C\) is an object $X$ of $\C$ coming with two maps
  \(\eta : \I \to X\) 
  and \(\mu : X \ox X \to X\), making the following diagrams commute.
  % https://q.uiver.app/#q=WzAsNCxbMCwwLCJcXEkgXFxveCBYIl0sWzEsMCwiWCBcXG94IFgiXSxbMSwxLCJYIl0sWzIsMCwiWCBcXG94IFxcSSJdLFswLDEsImUgXFxveCBcXGlkIl0sWzEsMiwibSJdLFswLDIsIlxcbGFtYmRhIiwyXSxbMywxLCJcXGlkIFxcb3ggZSIsMl0sWzMsMiwiXFxyaG8iXV0=&macro_url=https%3A%2F%2Fgist.githubusercontent.com%2Fmikidep%2Fd1bc17f8807cdf32405738530cb7fe17%2Fraw
  \[\tag{unitality}\begin{tikzcd}
	{\I \ox X} & {X \ox X} & {X \ox \I} \\
	& X
	\arrow["{\eta \ox \id}", from=1-1, to=1-2]
	\arrow["\lambda_X"', from=1-1, to=2-2]
	\arrow["\mu", from=1-2, to=2-2]
	\arrow["{\id \ox \eta}"', from=1-3, to=1-2]
	\arrow["\rho_X", from=1-3, to=2-2]
\end{tikzcd}\]
% https://q.uiver.app/#q=WzAsNSxbMCwxLCJYIFxcb3ggWCJdLFsyLDAsIlggXFxveCAoWCBcXG94IFgpIl0sWzIsMSwiWCBcXG94IFgiXSxbMSwyLCJYIl0sWzAsMCwiKFggXFxveCBYKSBcXG94IFgiXSxbMSwyLCJcXGlkIFxcb3ggbSJdLFsyLDMsIm0iXSxbNCwxLCJcXGFscGhhX3tYLCBYLCBYfSJdLFs0LDAsIm0gXFxveCBcXGlkIiwyXSxbMCwzLCJtIiwyXV0=&macro_url=https%3A%2F%2Fgist.githubusercontent.com%2Fmikidep%2Fd1bc17f8807cdf32405738530cb7fe17%2Fraw
\[\tag{associativity}\begin{tikzcd}[column sep=small]
	{(X \ox X) \ox X} && {X \ox (X \ox X)} \\
	{X \ox X} && {X \ox X} \\
	& X
	\arrow["{\alpha_{X, X, X}}", from=1-1, to=1-3]
	\arrow["{\mu \ox \id}"', from=1-1, to=2-1]
	\arrow["{\id \ox \mu}", from=1-3, to=2-3]
	\arrow["\mu"', from=2-1, to=3-2]
	\arrow["\mu", from=2-3, to=3-2]
\end{tikzcd}\]
\end{definition}
%  We will refer to \(X\) as the \emph{carrier}
%  of the monoid, to \(\eta\) as its unit and to \(\mu\) as its multiplication.

\begin{definition}[Monoid morphisms]
  \label{def:cat-mon}
  Let \((X , \eta , \mu)\) and \((Y , \eta' , \mu')\) be monoids in \(\C\).
  A \emph{monoid morphism} between them is a morphism \(f : \C(X , Y)\),
  making the following diagrams commute:

% https://q.uiver.app/#q=WzAsNyxbMSwwLCJYIl0sWzEsMSwiWSJdLFswLDAsIlxcSSJdLFszLDAsIlggXFxveCBYIl0sWzMsMSwiWSBcXG94IFkiXSxbNCwwLCJYIl0sWzQsMSwiWSJdLFswLDEsImYiXSxbMiwwLCJcXGV0YSJdLFsyLDEsIlxcZXRhJyIsMl0sWzMsNCwiZiBcXG94IGYiLDJdLFszLDUsIlxcbXUiXSxbNSw2LCJmIl0sWzQsNiwiXFxtdSciLDJdXQ==&macro_url=https%3A%2F%2Fgist.githubusercontent.com%2Fmikidep%2Fd1bc17f8807cdf32405738530cb7fe17%2Fraw
\begin{equation}\label{eq:monoid-morph-laws}
\begin{tikzcd}
	\I & X && {X \ox X} & X \\
	& Y && {Y \ox Y} & Y
	\arrow["\eta", from=1-1, to=1-2]
	\arrow["{\eta'}"', from=1-1, to=2-2]
	\arrow["f", from=1-2, to=2-2]
	\arrow["\mu", from=1-4, to=1-5]
	\arrow["{f \ox f}"', from=1-4, to=2-4]
	\arrow["f", from=1-5, to=2-5]
	\arrow["{\mu'}"', from=2-4, to=2-5]
\end{tikzcd}
\end{equation}
\end{definition}

It can be shown that identities and compositions of monoidal morphisms are
themselves monoidal morphisms. Hence, monoids in a monoidal category \((\C, \I,
\ox)\) together with the above defined morphisms form in turn a category,
denoted \(\Mon(\C)\). The following lemma is relevant to the main result of this
work.

\begin{lemma}
  \label{lem:mff-creates-monoids}
  Let \((\C, I, \ox)\) and \((\D, \I', \ox')\) be monoidal categories, and let
  \(F : \C \to \D\) with \(\eps : I' \to F I\) and \(\psi_{X,Y} : F X \ox' F Y
  \to F (X \ox Y)\) be a full, faithful and strong monoidal functor. Then, \(F\)
  induces a full and faithful functor between \(\Mon(\C)\) and \(\Mon(\D)\), {\ie}
  \(\Mon(\C)\) is a full subcategory of \(\Mon(\D)\).
\end{lemma}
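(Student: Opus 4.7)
The plan is to exhibit an induced functor $\Mon(F) : \Mon(\C) \to \Mon(\D)$ and then verify it is fully faithful. On objects, $\Mon(F)$ sends a monoid $(X, \eta, \mu)$ in $\C$ to $(FX, \eps \semi F\eta, \psi_{X,X} \semi F\mu)$ in $\D$; on morphisms, it acts as $F$. First I would check that this really yields a monoid in $\D$: the unitality and associativity laws follow by pasting the corresponding laws of $(X, \eta, \mu)$ whiskered by $F$ together with naturality of $\psi$, $\lambda'$, $\rho'$, $\alpha'$ and the coherence axioms of the strong monoidal functor $F$. I would similarly verify that if $f$ is a monoid morphism in $\C$, then $Ff$ is a monoid morphism between the $\Mon(F)$-images, by applying $F$ to the squares of \Cref{eq:monoid-morph-laws} and composing with naturality of $\psi_{-,-}$ at $f$. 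Functoriality of $\Mon(F)$ is then immediate from functoriality of $F$.

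Faithfulness of $\Mon(F)$ is immediate from faithfulness of $F$, since a monoid morphism in either category is determined by its underlying morphism. For fullness, given a monoid morphism $g : \Mon(F)(X, \eta_X, \mu_X) \to \Mon(F)(Y, \eta_Y, \mu_Y)$ in $\D$, fullness of $F$ yields some $f : X \to Y$ in $\C$ with $Ff = g$, and this $f$ is unique by faithfulness. It remains to show that $f$ is itself a monoid morphism in $\C$, for then $\Mon(F)(f) = Ff = g$ exhibits $g$ as the $\Mon(F)$-image of a monoid morphism.

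The main obstacle will be this last step, and it is where strongness (as opposed to mere laxness) of $F$ is crucial. Since $F$ is faithful, to check the two squares of \Cref{eq:monoid-morph-laws} for $f$ in $\C$, it suffices to check that their $F$-images commute in $\D$. For the unit square, I would precompose the two $F$-image sides with the isomorphism $\eps$ and use that $g = Ff$ preserves the induced units of $\Mon(F)(X)$ and $\Mon(F)(Y)$; cancelling $\eps$, which is possible since it is invertible, yields exactly the $F$-image of the unit square in $\C$. For the multiplication square, I would precompose with $\psi_{X,X}$, use that $g$ preserves the induced multiplications, and apply naturality of $\psi$ at $f$ to identify $\psi_{X,X} \semi F(f \ox f)$ with $(Ff \ox' Ff) \semi \psi_{Y,Y}$; cancelling $\psi_{X,X}$ via its inverse yields the desired equation in $\D$. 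Faithfulness of $F$ then transports both equations back to $\C$, concluding that $f$ is a monoid morphism and therefore that $\Mon(F)$ is full.
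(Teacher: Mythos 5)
Your proposal is correct for the lemma as stated, and your full-faithfulness argument is essentially the paper's: both reduce the monoid-morphism conditions for the lift \(f\) of \(g\) to their \(F\)-images, then cancel the invertible \(\eps\) and \(\psi\), invoke naturality of \(\psi\) at \(f\), and finish by faithfulness. The difference is one of scope. The paper's proof spends most of its effort on an object-level fact that your proposal does not establish: starting from an \emph{arbitrary} monoid \((FX,\eta',\mu')\) in \(\D\) on an object in the image of \(F\), it uses the isomorphisms \(\D(FX\ox' FX,FX)\simeq\C(X\ox X,X)\) and \(\D(\I',FX)\simeq\C(\I,X)\) (composition with \(\inv\psi_{X,X}\), \(\inv\eps\), then \(\mapFi\)) to produce \((X,\mapFi(\inv\eps\semi\eta'),\mapFi(\inv\psi_{X,X}\semi\mu'))\), and verifies by a pasting argument --- again reducing to the \(F\)-image via faithfulness --- that this is a monoid in \(\C\). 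This ``creation'' of monoid structures is strictly more than full faithfulness of \(\Mon(F)\) and is not demanded by the statement, so your proof discharges the lemma; but it is precisely what the proof of \Cref{thm:main} implicitly relies on to conclude that \(\Mon(\ICI)\) is equivalent to the \emph{full subcategory of \(\Mnd(\SetI)\) spanned by monads on container extents} (one needs every monad structure on \(\ext{\cont SP}\) to arise from a monoid structure on \(\cont SP\), i.e.\ essential surjectivity onto that subcategory, not merely full faithfulness). If you intend your lemma to support that application, you should add this object-level direction; it goes through by the same cancel-the-isomorphisms-then-use-faithfulness technique you already employ for morphisms.
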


\begin{proof}
  Let \((X , \eta , \mu)\) be a monoid. It is well-known that $F$, being lax monoidal,
  induces a monoid structure on \(F X\) with:
  \begin{align*}
    \eta' &:= \I' \xrightarrow{\eps} F \I \xrightarrow{\mapF \eta} F X \\
    \mu' &:= F X \ox' F X \xrightarrow{\psi_{X,X}} F (X \ox X) \xrightarrow{\mapF \mu}
    F X
  \end{align*}

  \(F\) being strong monoidal, \(\eps\) and \(\psi\) are natural isomorphisms,
  therefore their precompositions \((\eps \comp {-})\) and \((\psi \comp {-})\) are in 
  turn isomorphisms, so we have:
  \begin{align*}
    {}     & \D(F X \ox' F X , F X) \\
    \simeq & \D(F (X \ox X) , F X)& \text{\((\psi_{X,X} \comp {-})\) is an iso} \\
    \simeq & \C(X \ox X , X) & \text{\(F\) is full and faithful}
  \end{align*}
  Notice that the right-to-left function underlying this isomorphism constructs the multiplication operation \(\mu' : \D(F X \ox' F X , F X)\) described above from a multiplication operation \(\mu : \C(X \ox X , X)\).
  
  Analogously, for \(\eps\) we have $\D(\I' , F X) \simeq \C(\I , X)$.

  It is only left to prove that under the inverses the monoid equations are
  still satisfied. Let \((F X ,  \eta' , \mu')\) be a monoid in \(\D\). Walking through
  the above isomorphism chain, this gets mapped to:
  \[
    (X , \mapFi(\eps^{-1} \comp  \eta') , \mapFi(\psi^{-1}_{X,X} \comp \mu'))
  \]

  This has to satisfy the monoid equations from \Cref{def:monoid} in \(\C\). We
  will only report here one of the unitality proofs, the rest of them can be
  worked out similarly. The following diagram has to commute.
  % https://q.uiver.app/#q=WzAsMyxbMCwwLCJcXEkgXFxveCBYIl0sWzEsMSwiWCJdLFsxLDAsIlggXFxveCBYIl0sWzAsMSwiXFxsYW1iZGEiLDJdLFsyLDEsIlxcbWFwaShcXGludlxcbXUgOyBtJykiXSxbMCwyLCJcXG1hcGkoXFxpbnZcXGV0YSA7IGUnKSBcXDsgXFxveCBcXDsgXFxpZCJdXQ==&macro_url=https%3A%2F%2Fgist.githubusercontent.com%2Fmikidep%2Fd1bc17f8807cdf32405738530cb7fe17%2Fraw%2F05c2ba3ce55646302f8b03365ed57eb4a269ef3e%2Fmacros.tex
  \[\begin{tikzcd}[column sep=7em]
	{\I \ox X} & {X \ox X} \\
	& X
	\arrow["{\mapFi(\inv\eps \comp \eta') \ox \id}", from=1-1, to=1-2]
	\arrow["\lambda_X"', from=1-1, to=2-2]
	\arrow["{\mapFi(\inv\psi \comp \mu')}", from=1-2, to=2-2]
\end{tikzcd}\]

Since \(F\) is faithful, it suffices to show that the image of the diagram
  under \(F\) commutes. That is achieved by the following pasting:
  % https://q.uiver.app/#q=WzAsNixbMCwwLCJGKEkgXFxveCBYKSJdLFswLDMsIkYgWCJdLFszLDAsIkYgKFggXFxveCBYKSJdLFszLDMsIkYgWCBcXG94JyBGIFgiXSxbMSwxLCJGIEkgXFxveCcgRiBYIl0sWzEsMiwiSScgXFxveCcgRiBYIl0sWzAsMSwiXFxtYXBcXGxhbWJkYSIsMl0sWzAsMiwiXFxtYXBfRihcXGludlxcbWFwX0YoXFxpbnZcXGV0YSBcXHNlbWkgZScpIFxcb3ggXFxpZF9YKSJdLFszLDEsIm0nIl0sWzIsMywiXFxpbnZcXG11Il0sWzAsNCwiXFxpbnZcXG11Il0sWzUsMSwiXFxsYW1iZGEnIiwyXSxbNCw1LCJcXGludlxcZXRhIFxcb3gnIFxcaWRfe0ZYfSJdLFs1LDMsImUnIFxcb3gnIFxcaWRfe0ZYfSJdLFs4LDUsIlxcdGV4dHsobW9ub2lkKX0iLDEseyJzaG9ydGVuIjp7InRhcmdldCI6MjB9LCJjb2xvdXIiOlsyNDAsNjAsNDBdLCJzdHlsZSI6eyJib2R5Ijp7Im5hbWUiOiJub25lIn0sImhlYWQiOnsibmFtZSI6Im5vbmUifX19LFsyNDAsNjAsNDAsMV1dLFsxMiw2LCJcXHRleHR7KFxcKEZcXCkgY29oZXJlbmNlKX0iLDEseyJzaG9ydGVuIjp7InNvdXJjZSI6MjAsInRhcmdldCI6MjB9LCJjb2xvdXIiOlsyNDAsNjAsNDBdLCJzdHlsZSI6eyJib2R5Ijp7Im5hbWUiOiJub25lIn0sImhlYWQiOnsibmFtZSI6Im5vbmUifX19LFsyNDAsNjAsNDAsMV1dLFs3LDksIlxcdGV4dHsoXFwoXFxpbnZcXG11XFwpIG5hdC4pfSIsMSx7InNob3J0ZW4iOnsidGFyZ2V0IjoyMH0sImNvbG91ciI6WzI0MCw2MCw0MF0sInN0eWxlIjp7ImJvZHkiOnsibmFtZSI6Im5vbmUifSwiaGVhZCI6eyJuYW1lIjoibm9uZSJ9fX0sWzI0MCw2MCw0MCwxXV1d&macro_url=https%3A%2F%2Fgist.githubusercontent.com%2Fmikidep%2Fd1bc17f8807cdf32405738530cb7fe17%2Fraw%2Fmacros.tex
\[\begin{tikzcd}
 	&&&& {F (X \ox X)} \\
	{F(I \ox X)} & {F I \ox' F X} & {I' \ox' F X} && {F X \ox' F X} \\
	&&&& {F X}
	\arrow["{\inv\psi_{X,X}}", from=1-5, to=2-5]
	\arrow[""{name=0, anchor=center, inner sep=0}, "{\mapF(\mapFi(\inv\eps \comp \eta') \; \ox \id)}", curve={height=-30pt}, from=2-1, to=1-5] 
	\arrow["{\inv\psi_{\I,X}}", from=2-1, to=2-2]
	\arrow[""{name=1, anchor=center, inner sep=0}, "{\mapF\lambda_X}"', curve={height=30pt}, from=2-1, to=3-5]
	\arrow["{\inv\eps \ox' \id}"', from=2-2, to=2-3]
	\arrow["{\eta' \ox' \id}"', from=2-3, to=2-5]
	\arrow[""{name=2, anchor=center, inner sep=0}, "{\lambda'_{FX}}"', curve={height=15pt}, from=2-3, to=3-5]
	\arrow["{\mu'}", from=2-5, to=3-5]
        \arrow["", "{\mapF(\mapFi(\inv\eps \comp \eta')) \ox \mapF \id}", curve={height=-15pt}, from=2-2, to=2-5]
	\arrow["{\text{(\(\inv\psi\) nat.)}}"{description, pos=0.7}, draw=none, from=2-3, to=0]
	\arrow["{\text{($F$ oplax mon.\ funct.)}}"{description}, draw=none, from=2-3, to=1]
	\arrow["{\text{($FX$ mon.)}}"{description, pos=0.6}, draw=none, from=2-5, to=2]
      \end{tikzcd}
    \]

    Lastly, it can be shown that for fixed \(M, M' : \Mon(\C)\), the (bijective)
    action of \(F\) on \(\Mon(\C)(M , M')\) respects and reflects the
    commutative diagrams in (\ref{eq:monoid-morph-laws}).
\end{proof}

\subsection{Monoid structures on indexed containers}%
\label{sub:Synthetic monoid structures on Indexed Containers}

In the specific monoidal category of indexed containers described in
Subsection~\ref{sub:mon_struct_ind_cont}, monoids can be presented in a more
combinatorial fashion. 

\begin{definition}[\(\ICMS\)
  \refagda{IndexedMonad.html\#ICMS}]
  \label{def:icms}

Let \(\cont{S}{P}\) be an indexed container. An \emph{indexed container
  monoid structure ($\ICMS$)} on \(\cont{S}{P}\) consists of the following data:
\begin{align*}
  \e & : \iprod{i : I} S\; i \\
  \bullet & : \iprod{i : I} \prod_{s : S\;i} (P \iarg{i}~s \toI S) \to S~ i \\
  \Peidx & : \iprod{i : I} \iprod{j : I} P \iarg{i}~ \e \iarg{i} ~ j \to i \equiv j \\
  {\upa} & : \iprod{i : I} \iprod{s : S\, i} \iprod{s' : P \iarg{i}\, s \toI S} \iprod{j : I} P \iarg{i} ~(s \bullet s')~j \to I \\
  {\ula} & : \iprod{i : I} \iprod{s : S\, i} \iprod{s' : P \iarg{i}\, s \toI S} \iprod{j : I} \prod_{p : P \iarg{i} \, (s \bullet s')\, j} P \iarg{i}~ s~ (\upa p) \\
  {\ura} & : \iprod{i : I} \iprod{s : S\, i} \iprod{s' : P \iarg{i}\, s \toI S} \iprod{j : I} \prod_{p : P \iarg{i} \, (s \bullet s')\, j} P \iarg{\upa p}~ (s' \iarg{\upa p} ~ (\ula p))~ j
\end{align*}

Arguments \(i, s, s', j\) for \(\upa\), \(\ula\) and \(\ura\) are implicit as
they can be often inferred by the type of \(p\). However, on occasion, we might
specify them as subscripts, in their respective order.

For any index $i : I$, shape $s : S\; i$ and index $j : I$, we require
\begin{align*}
  s \bullet \iarg{i} \eP  &\equiv s \label{eqn:e-unit-l} \tag{\textsf{e-unit-l}} \\
  \iula{i, s, \eP, j}~ p &\equiv p \label{eqn:ul-unit-l} \tag{\textsf{\(\ula\)-unit-l}}
             & \forall p : P \iarg{i}~ (s \bullet \eP)~ j\\
\e_i \bullet \iarg{i} \sbar&\equiv s \label{eqn:e-unit-r} \tag{\textsf{e-unit-r}} \\
\iura{i, \e_i, \sbar, j}~ p &\equiv p \label{eqn:ur-unit-r} \tag{\textsf{\(\ura\)-unit-r}}
             & \forall p : P \iarg{i}~ (\e_i \bullet \sbar)~ j\\
\end{align*}
where
\begin{align*}
  \eP & : P \iarg{i}~s \toI S \\
  \eP \iarg{j} ~ \_ & := \e \iarg{j} \\
  \sbar & : P \iarg{i}~\e \iarg{i} \toI S \\
  \sbar \iarg{j} ~  \_ & := s
\end{align*}  

For any index $i : I$, families of shapes $s : S\; i , s' : P \iarg{i} ~s \toI S , 
  s'' : \iprod {k: I} \prod_{q : P \iarg{i}~ s ~ k} P \iarg{k} ~(s' \iarg{k} ~ q) \toI S$, index $j : I$ and position $p : P \iarg{i} ((s \bullet s') \bullet \smoosh{s''}) ~ j$, we require
\begin{align*}
(s \bullet \iarg{i} s') \bullet \iarg{i} \smoosh{s''} &\equiv s \bullet \iarg{i} (s' \Pibub s'') \label{eqn:bullet-assoc} \tag{\textsf{\(\bullet\)-assoc}} \\
\iupa{i, s \bullet \iarg{i} s', \smoosh{s''}, j}~ p &\equiv \iupa{\upa p, s'(\ula p), s''(\ula p), j}~ (\iura{i, s, s' \Pibub s'', j}~ p) \label{eqn:up-urup-assoc} \tag{\textsf{\(\upa\)-\(\ura\upa\)-assoc}} \\
\iupa{i, s, s', \upa p}~ (\iula{i, s \bullet \iarg{i} s', \smoosh{s''}, j}~ p) &\equiv \iupa{i, s,
s' \Pibub s'', j}~ p \label{eqn:ulup-up-assoc} \tag{\textsf{\(\ula\upa\)-\(\upa\)-assoc}} \\
\iula{i, s, s', \upa p}~ (\iula{i, s \bullet \iarg{i} s', \smoosh{s''}, j}~ p) &\equiv \iula{i, s, s' \Pibub s'', j}~ p \label{eqn:ulul-ul-assoc} \tag{\textsf{\(\ula\ula\)-\(\ula\)-assoc}} \\
\iura{i, s, s', \upa p}~ (\iula{i, s \bullet \iarg{i} s', \smoosh{s''}, j}~ p) &\equiv \iula{\upa p, s' (\ula p), s'' (\ula p), j}~ (\iura{i, s, s' \Pibub s'', j}~ p)
\label{eqn:ulur-urul-assoc}
\tag{\textsf{\(\ula\ura\)-\(\ura\ula\)-assoc}}
\\
\iura{i, s \bullet \iarg{i} s', \smoosh{s''}, j}~ p &\equiv \iura{\upa p, s'
(\ula p), s'' (\ula p), j}~ (\iura{i, s, s' \Pibub s'', j}~ p) \label{eqn:ur-urur-assoc} \tag{\textsf{\(\ura\)-\(\ura\ura\)-assoc}}
\end{align*}
where
\begin{align*}
  s' \Pibub s'' & : P \iarg{i} ~ s \toI S \\
  (s' \Pibub s'') \iarg{k} ~q  & := s' \iarg{k} ~ q \bullet \iarg{k} s'' \iarg{k} ~ q \\
  \smoosh{s''} & : P \iarg{i} ~(s \bullet \iarg{i} s')  \toI S \\
  \smoosh{s''} \iarg{\ell} ~ r & := (s'' \iarg{\upa r} ~ (\ula r)) \iarg{\ell} ~ (\ura r)
\end{align*}

We will refer to the type of such structures as \(\ICMS(\cont{S}{P})\).
\end{definition}

Comparing this with the characterization of containers with monad structures in
Uustalu's work \cite[Section 3]{Uustalu2017}, it is evident that every piece of
data and every equation outlined there reappears in \Cref{def:icms}, adapted to
carry the necessary indices. However, the following pieces of data are new to
the indexed case:

\begin{itemize}
 \item $\Peidx{}_{i,j}~p$ is an extra coherence required of positions $p$ on all shapes
   \(\e_i\), arising from unitality: their source index must be the same as their
   target index.
 \item \(\upa p\) fulfills a similar function as \(\ula p\): where the
   latter maps a position in the composite shape to a position in the outer shape, the former specifies this new position's 
   index.
 \item \eqref{eqn:up-urup-assoc} and \eqref{eqn:ulup-up-assoc} are extra
   coherences deriving from how associativity acts on indices.
\end{itemize}

It is also worth noting that, as it was in \cite{Uustalu2017}, many of the
equations are heterogeneous, {\ie} the types of the equated terms only match
because of other non-definitional equations. As an instance, in equation
\eqref{eqn:ulul-ul-assoc}, the left-hand side has type
\(P\; s\; (\upa (\ula p))\), while the right-hand side has type
\(P\; s\; (\upa p)\): the two types are equal since \(\upa (\ula p)\) and \(\upa p\) are equal by \eqref{eqn:ulup-up-assoc}.
Additionally, the same terms often are required to
have propositionally, but not definitionally equal types, in the equated
expressions, {\eg} in equation \eqref{eqn:ulur-urul-assoc}, \(p\) should simultaneously
have types \(P\; ((s \bullet s') \bullet \smoosh{s''})\; j\) and \(P\; (s \bullet
(s' \Pibub s''))\; j\), which are provably equal due to
\eqref{eqn:bullet-assoc}.

Such heterogeneous equalities can also be found in the non-indexed case \cite{Uustalu2017}.
However, some new
ones appear due to \(\Peidx\), which are less trivial. In particular, for any $i : I, s : S\; i$ and $j : J$ we have 
\begin{align*}
  \Peidx{}_{i,j} ~ (\ura p) & : \upa p \equiv j
             & \forall p : P \iarg{i}~ (s \bullet \eP)~ j\\
  \Peidx{}_{i,j} ~ (\ula p) & : i \equiv \upa p 
             & \forall p : P \iarg{i}~ (\e_i \bullet \sbar)~ j
\end{align*}
The equations \eqref{eqn:ul-unit-l} and \eqref{eqn:ur-unit-r} type-check thanks to these two derivable equality proofs.

As should be clear by now, \Cref{def:icms} is constructed \emph{ad-hoc} to capture
the combinatorial structure of monoids in $(\ICI, \I, \ox)$. In fact, the
following holds:

\begin{lemma}[Characterization of monoids]
  \label{lem:icmon}
  Let \(\cont{S}{P}\) be an indexed container. The type of monoids structures in \(\triple
  \ICI \ox \I\) on \(\cont{S}{P}\) is equivalent to \(\ICMS(\cont{S}{P})\).
\end{lemma}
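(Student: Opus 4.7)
My plan is to peel off the layers of the definition of monoid in $(\ICI, \I, \ox)$ using \Cref{def:container-morphism} together with the explicit descriptions of $\I$ and $\ox$ from \Cref{sub:mon_struct_ind_cont}, and to observe that the result is, datum-for-datum and equation-for-equation, the type $\ICMS(\cont{S}{P})$.

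First I would decompose the unit morphism: $\eta : \I \to \cont{S}{P}$ unfolds into a pair $(\fcart~\eta, \fvert~\eta)$, and using that the shapes of $\I$ are trivial and its positions are $(i \equiv j)$, these two components are precisely $\e$ and $\Peidx$. Next I would decompose the multiplication: the shapes of $\cont{S}{P}\ox\cont{S}{P}$ at $i$ are $\sum_{s:S~i} P \iarg{i}~s \toI S$, so $\fcart~\mu$, after uncurrying, is the operation $\bullet$; and the positions of the composite at $(s,s')$ and index $k$ form the three-fold dependent sum $\sum_{j:I}\sum_{p' : P \iarg{i}~s~j} P \iarg{j}~(s'~p')~k$, so decomposing $\fvert~\mu$ along these three summands yields exactly the maps $\upa$, $\ula$ and $\ura$.

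I would then translate the laws. Equality of container morphisms splits, under the same decomposition, into a cartesian part and a vertical part; the two unitality triangles consequently give one shape-level equation each (\textsf{e-unit-l} and \textsf{e-unit-r}) plus one position-level equation each (\textsf{$\ula$-unit-l} and \textsf{$\ura$-unit-r}). The associativity square gives the shape-level equation \textsf{$\bullet$-assoc} and, on positions, an equality between two maps into a three-fold dependent sum arising from the two different bracketings of the composite; projecting each side onto the outer index, the middle position and the inner position yields the five remaining equations of \Cref{def:icms}. Chaining the standard $\Sigma$-type and $\Pi$-type re-associations and currying steps, in the style of the unfolding performed just after \Cref{def:container-morphism}, assembles all of this into an equivalence between monoid structures on $\cont{S}{P}$ and inhabitants of $\ICMS(\cont{S}{P})$.

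The main obstacle is the heterogeneity of the equations, already visible in the statement of \Cref{def:icms}: for instance \textsf{$\ula\ula$-$\ula$-assoc} type-checks only modulo \textsf{$\ula\upa$-$\upa$-assoc}, and \textsf{$\ula$-unit-l}, \textsf{$\ura$-unit-r} require the transports derived from $\Peidx$. The equivalence must therefore be constructed in a specific order, so that each transport is available at the point where it is needed; this is precisely the delicate part of the argument, and it is where a Cubical Agda formalization becomes essential in order to keep track of the dependency between the equations.
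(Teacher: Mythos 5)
Your proposal is correct and follows essentially the same route as the paper: unfold $\eta$ and $\mu$ via the decomposition of container morphisms against the explicit descriptions of $\I$ and $\ox$, match the resulting components with $\e$, $\Peidx$, $\bullet$, $\upa$, $\ula$, $\ura$ and the laws with the $\ICMS$ equations, and leave the heterogeneous-equality bookkeeping to the formalization. The paper states the correspondence in the direction from $\ICMS$ to monoid structure, but this is the same bijection read the other way, and your remarks on the order in which the transports must be produced accurately identify the delicate part that the paper also defers to Agda.
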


\begin{agdaproof}{IndexedMonad.html\#ICMonoid\%E2\%89\%83ICMS}
  We will illustrate how the data of an \(\ICMS\) is translated into the
  corresponding monoid structure. To that end, assume we have an \(\ICMS(\cont
  SP)\). We shall use it to define two container morphisms,  \(\eta :
  \IC(\I , \cont SP)\), and \(\mu : \IC((\cont SP) \ox (\cont SP) , \cont
  SP)\):
  \begin{align*}
   \eta \iarg{i} ~{\star} & := (\e \iarg{i}, \Peidx{} \iarg{i}) \\
   \mu \iarg{i} ~(s , s') & := (s \bullet \iarg{i} s', \mu^P \iarg{i} (s, s')) 
  \end{align*}
  where
  \begin{align*}
   (\mu^P \iarg{i} (s, s')) \iarg{k} ~p & := (\upa p, \ula p, \ura p)    
 \end{align*}  
  Showing that they satisfy the monoid equations, and that this mapping is a
  bijection, is mechanical and is better left to the accompanying
  formalization. However, intuitively, data is just being rearranged between the
  two types, and there is not much leeway for lossy operations. As one can imagine,
  the equations in the definition of \(\ICMS\) are tailored so that \(\eta\) and
  \(\mu\) satisfy the monoid equations.
\end{agdaproof}

The Agda formalization of \Cref{lem:icmon} requires manipulating and reasoning about the heterogeneous equalities in the definition of \(\ICMS\), which is a non-trivial exercise in intensional type theory.
We direct the interested reader to our Agda formalization for all the details.

After working out the proof, we were glad to observe that the theorem is true also when the container \(\cont{S}{P}\) is valued in types with an arbitrary h-level, {\ie} when $S ~i$ is not necessarily a set (what is called an h-set in homotopy type theory), and the same for $P \iarg{i} ~s ~j$.
The assumption of containers with $\Set$-valued shapes and positions is not instrumental in this case.
This suggests that the combinatorial specification of indexed monad containers in \Cref{def:icms}, as well as \Cref{lem:icmon}, can be adapted to higher dimensional settings, with sets replaced by higher groupoids, which is an avenue of research that we plan to investigate in the future (more on this in the conclusive section).

\begin{definition}[\(\ICMS\) morphism
  \refagdaheading{IndexedMonadMorphism.html}]
  \label{def:isICMM}

  Given two containers \((\cont S P)\) and \((\cont{S'}{P'})\) equipped with
  structures \((\e, \bullet, \Peidx,\upa,\ula,\ura) : \ICMS(\cont{S}{P})\) and
  \((\e', \bullet', \Peidx',\upa',\ula',\ura') : \ICMS(\cont{S'}{P'})\), and a
  container morphism \(f : \ICI(\cont S P , \cont{S'}{P'})\).
  We say that \(f\) is an \(\ICMS\) morphism between the two structures when it
  satisfies the following, for any index $i : I$, families of shapes $s : S\; i
  , s' : P \iarg{i} ~s \toI S$, index $j : I$ and position $p : P
  \iarg{i} ((\fcart~ f)\iarg{i}~ (s \bullet v)) ~ j$:

  \begin{align*}
    (\fcart~ f)\iarg{i} & \equiv \e'_i
  \label{eqn:f-e} \tag{\textsf{hom-\(\e\)}} \\
    \fvert~ f~ \e_i \compI \Peidx & \equiv \Peidx'
    \label{eqn:f-P-e-idx} \tag{\textsf{hom-\(\Peidx\)}} \\
    (\fcart~ f)\iarg{i}~ s \bullet' (\fvert~ f \compI v \compI \fcart~ f)
      & \equiv (\fcart~ f)\iarg{i}~ (s \bullet v)
      \label{eqn:f-bullet} \tag{\textsf{hom-\(\bullet\)}}
    \\
    \upa'~ p & \equiv \upa (\fvert~ f~ p)
    \label{eqn:f-up} \tag{\textsf{hom-\(\upa\)}}
    \\
    \fvert~ f(\ula'~ p) & \equiv \ula (\fvert~ f~ p)
    \label{eqn:f-ul} \tag{\textsf{hom-\(\ula\)}}
    \\
    \fvert~ f(\ura'~ p) & \equiv \ura (\fvert~ f~ p)
    \label{eqn:f-ur} \tag{\textsf{hom-\(\ura\)}}
  \end{align*}
\end{definition}

Such equations are obtained by applying the decomposition in
\Cref{def:container-morphism} to the diagrams in \Cref{def:cat-mon}, and in
fact:

\begin{lemma}[Characterization of monoid morphisms]
  \label{lem:isicmm}
  Let \((\cont SP,\eta,\mu)\) and \((\cont{S'}{P'},\eta',\mu')\) be objects of \(\Mon(\ICI)\).
  A morphism \(f : \ICI(\cont SP,\cont{S'}{P'})\) is a monoid morphism between \((\cont SP,\eta,\mu)\) and \((\cont{S'}{P'},\eta',\mu')\)
  if and only if it is an \(\ICMS\) morphism between \(E(\cont SP,\eta,\mu)\) and \(E(\cont{S'}{P'},\eta',\mu')\), where \(E\) is the equivalence in \Cref{lem:icmon}. 
\end{lemma}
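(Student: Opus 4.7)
The plan is to unfold the two monoid morphism diagrams from \Cref{def:cat-mon} component-wise, using the decomposition of a container morphism into its shape part \(\fcart\) and position part \(\fvert\) from \Cref{def:container-morphism}, and then to match the resulting equations against the six clauses of \Cref{def:isICMM}. Since \(E\) is a bijection at the level of underlying data (\Cref{lem:icmon}) and leaves the underlying container \(\cont SP\) fixed, both directions of the biconditional reduce to a pointwise translation of equations.

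First, I would handle the unit square \(\eta \semi f \equiv \eta'\). Recall from the proof of \Cref{lem:icmon} that \(\eta\iarg{i}\,{\star} = (\e_i, \Peidx\iarg{i})\) and similarly \(\eta'\iarg{i}\,{\star} = (\e'_i, \Peidx'\iarg{i})\). Applying the composition in \(\ICI\), the left-hand side computes to \(((\fcart f)\iarg{i}~\e_i,\ \fvert f~\e_i \compI \Peidx\iarg{i})\). Equating this with \(\eta'\iarg{i}\,{\star}\) splits, by the \(\Sigma\)-decomposition, into exactly the shape equation \eqref{eqn:f-e} and the (heterogeneous, across \eqref{eqn:f-e}) position equation \eqref{eqn:f-P-e-idx}. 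This is a straightforward unpacking, but one must keep careful track of the dependency of \(\Peidx'\) on \(\e'_i\), which is what forces the translation along \eqref{eqn:f-e} in the type of \eqref{eqn:f-P-e-idx}.

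Next, for the multiplication square \((f \ox f) \semi \mu' \equiv \mu \semi f\), I would expand both sides on an input \((s, s') : \ext{\cont SP}\,S\,i\). The right-hand side is \((\fcart f)\iarg{i}\,(s \bullet\iarg{i} s')\) paired with \(\fvert f \compI \mu^P\iarg{i}(s, s')\), and the left-hand side, after applying \(f \ox f\) and then \(\mu'\), gives \((\fcart f)\iarg{i}~s \bullet' (\fvert f \compI s' \compI \fcart f)\) paired with the corresponding position map assembled from \(\upa'\), \(\ula'\), \(\ura'\). The shape component then yields exactly \eqref{eqn:f-bullet}. For the position component, the triple \((\upa', \ula', \ura')\) on the left must match the triple obtained by transporting \((\upa, \ula, \ura)\) along \(\fvert f\) on the right, and decomposing this triple equality gives precisely \eqref{eqn:f-up}, \eqref{eqn:f-ul}, \eqref{eqn:f-ur}. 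Each projection corresponds directly to one of the three position operations.

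Finally, I would verify that this translation is bidirectional: given the six equations of \Cref{def:isICMM}, the two monoid morphism diagrams hold by reassembling the \(\Sigma\)-pairs and position triples. The main obstacle will be the usual nuisance of heterogeneous equalities that already appeared in \Cref{def:icms}: the types of \(\Peidx'\), \(\upa'~p\), \(\ula'~p\), \(\ura'~p\) on the right-hand sides of the equations in \Cref{def:isICMM} only match those of \(\Peidx\), \(\upa(\fvert f~p)\), etc., after transport along \eqref{eqn:f-e} and \eqref{eqn:f-bullet}, so the proof must be carried out with care about which equalities are needed to type-check subsequent ones. As with \Cref{lem:icmon}, the bulk of the formal work is bookkeeping of these transports, which we defer to the Agda formalization.
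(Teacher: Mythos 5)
Your proposal is correct and follows essentially the same route as the paper: the paper itself introduces the six equations of \Cref{def:isICMM} precisely as the component-wise unfolding of the two monoid-morphism squares under the shape/position decomposition of a container morphism, and likewise defers the transport bookkeeping for the heterogeneous equalities to the Agda formalization. Nothing further is needed.
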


\begin{agdaproof}{IndexedMonadMorphism.html\#4101}
  For a proof, we refer the reader to the formalization.
\end{agdaproof}

By virtue of this equivalence, the class of \(\ICMS\) morphisms is closed under
identities and composition, as is that of monoid morphisms wrt. \((\ICI, \I,
\ox)\). Hence, the following definition is well-posed.

\begin{definition}[The category \(\ICMSCat\)]
  We define \(\ICMSCat\) to be the category having indexed containers
  equipped with an \(\ICMS\) as objects, and compatible \(\ICMS\) morphisms as
  morphisms.
\end{definition}

\begin{theorem}[\(\ICMSCat \simeq \Mon(\ICI)\)]
  \label{thm:icmscat}
  The category of monoids in the monoidal category \((\ICI, \I, \ox)\) is
  equivalent to \(\ICMSCat\).
\end{theorem}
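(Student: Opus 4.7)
The plan is to bundle \Cref{lem:icmon} and \Cref{lem:isicmm} into an isomorphism (in particular, an equivalence) of categories. On objects, I would use the equivalence \(E\) from \Cref{lem:icmon} fibrewise: send a monoid \((\cont SP, \eta, \mu)\) in \((\ICI, \I, \ox)\) to the indexed container \(\cont SP\) equipped with the \(\ICMS\) structure \(E(\eta, \mu)\). On morphisms, I would define the assignment to be the identity on underlying container morphisms; this is well-typed by the ``only if'' direction of \Cref{lem:isicmm}, which guarantees that a monoid morphism is automatically an \(\ICMS\) morphism between the transported structures.

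Next I would check that this assignment \(F : \Mon(\ICI) \to \ICMSCat\) is indeed functorial. Since \(F\) acts as the identity on underlying morphisms of \(\ICI\), and since identities and composites in both \(\Mon(\ICI)\) and \(\ICMSCat\) are by definition inherited from \(\ICI\) (with the monoid/\(\ICMS\) compatibility conditions forming a proposition that is merely required to hold), preservation of \(\id\) and \(\semi\) is immediate.

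I would then define the inverse functor \(G : \ICMSCat \to \Mon(\ICI)\) analogously, using \(E^{-1}\) on objects and the ``if'' direction of \Cref{lem:isicmm} on morphisms. The pair \(F, G\) is mutually inverse on the nose: on objects this is the bijection of \Cref{lem:icmon}, and on morphisms both functors leave the underlying container morphism untouched. This already yields an isomorphism of categories, hence a fortiori the equivalence stated in \Cref{thm:icmscat}.

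The only step demanding genuine work is the correspondence between the two monoid-morphism squares of \eqref{eq:monoid-morph-laws} and the six combinatorial equations \eqref{eqn:f-e}--\eqref{eqn:f-ur}, but this is precisely \Cref{lem:isicmm}; similarly, the object-level correspondence between the unitality and associativity diagrams of \Cref{def:monoid} and the ten equations of \Cref{def:icms} is \Cref{lem:icmon}. With both lemmas in hand, the present theorem reduces to routine functor bookkeeping, as carried out in the Agda development.
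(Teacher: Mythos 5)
Your proposal is correct and matches the paper's (very terse) proof, which likewise derives the theorem directly from the definition of \(\ICMSCat\) together with \Cref{lem:icmon} and \Cref{lem:isicmm}; you have simply spelled out the routine functoriality and mutual-inverse bookkeeping that the paper leaves implicit.
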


\begin{proof}
  It follows from the definition of \(\ICMSCat\), together with
  \Cref{lem:icmon} and \Cref{lem:isicmm}.
\end{proof}

We can now state the main result of this work:

\begin{theorem}
  \label{thm:main}
  The full subcategory of the category of \(\SetI\)-monads, whose endofunctors
  are extents of indexed containers, is equivalent to \(\ICMSCat\).
\end{theorem}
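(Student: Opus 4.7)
The plan is to chain the earlier results together. By \Cref{thm:icmscat}, it suffices to prove that the full subcategory of $\SetI$-monads whose underlying endofunctors lie in the image of $\ext{-}$ is equivalent to $\Mon(\ICI)$, because the category of monads on $\SetI$ is by definition $\Mon(\Endo(\SetI))$ (with its composition monoidal structure).

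First, I would invoke \Cref{lem:ext-ff} and \Cref{lem:ext-strmon}: the functor $\ext{-} : \ICI \to \Endo(\SetI)$ is full, faithful, and strong monoidal. These are exactly the hypotheses of \Cref{lem:mff-creates-monoids}, so applying that lemma yields a full and faithful functor
\[
\Mon(\ext{-}) : \Mon(\ICI) \longrightarrow \Mon(\Endo(\SetI)).
\]
This identifies $\Mon(\ICI)$ with a full subcategory of the category of $\SetI$-monads.

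Next, I would pin down the essential image of $\Mon(\ext{-})$. An object of $\Mon(\Endo(\SetI))$ lies in the image precisely when its underlying endofunctor $F$ is (isomorphic to) the extent of some indexed container; the monoid structure transports back along $\ext{-}$ by the construction in the proof of \Cref{lem:mff-creates-monoids}, because $\ext{-}$ is fully faithful and strong monoidal, so both unit and multiplication admit preimages satisfying the monoid laws. Hence the essential image is exactly the full subcategory of monads whose endofunctors are extents of indexed containers.

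Composing the equivalence $\ICMSCat \simeq \Mon(\ICI)$ from \Cref{thm:icmscat} with the full-and-faithful embedding above gives the desired equivalence. The bulk of the technical work has already been absorbed into the prior lemmas; the only real subtlety, and hence the step I expect to require the most care, is checking that the essential-image characterization is literally ``extents of indexed containers'' (as opposed to a more restrictive or looser condition), which follows from unfolding the construction of the preimage monoid structure in \Cref{lem:mff-creates-monoids} and observing that it produces a bona fide object of $\Mon(\ICI)$ whenever $F \cong \ext{\cont S P}$.
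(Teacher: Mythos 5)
Your proposal is correct and follows the paper's own argument exactly: combine \Cref{lem:ext-ff} and \Cref{lem:ext-strmon} to apply \Cref{lem:mff-creates-monoids}, identifying \(\Mon(\ICI)\) with the full subcategory of \(\Mon(\Endo(\SetI))\) of monads carried by extents of indexed containers, then compose with the equivalence of \Cref{thm:icmscat}. Your extra attention to pinning down the essential image is a reasonable elaboration of a point the paper leaves implicit, but it is not a different route.
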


\begin{proof}
  \(\ext{-}\) is full, faithful (\Cref{lem:ext-ff}) and strong monoidal
  (\Cref{lem:ext-strmon}), so by \Cref{lem:mff-creates-monoids}, \(\Mon(\ICI)\)
  defines a full subcategory of \(\Mnd(\SetI) := \Mon (\Endo(\SetI))\). This full subcategory is equivalent to \(\ICMSCat\) by \Cref{thm:icmscat}.
\end{proof}

\section{Examples}%
\label{sec:Examples}

In this section we collect a few examples of monads arising from indexed containers,
as prescribed by Theorem \ref{thm:main}.
We show that pairs of indexed containers $\cont{S}{P}$ and combinatorial monoid structures in $\ICMS(\cont{S}{P})$ are closed under products.
We then describe indexed variants of state and writer monads.
We conclude with an example of a free monad that, when appropriately instantiated, produces a type of well-scoped untyped $\lambda$-terms.

\subsection{Product of two monads
\refagdaheading{IndexedMonad.Examples.Product.html}}%
\label{sub:exmpl-prod}

Given two monads \(\triple T \eta \mu\) and \(\triple {T'} {\eta'} {\mu'}\)
on \(\SetI\), one can endow the product endofunctor \(T \x T'\),
given by \((T \x T')\; X := T X \x T' X\), with a monad structure.
Furthermore, whenever \(T\) and \(T'\) are the extents of indexed containers,
their product is also the extent of an indexed container.
We use our characterization to describe the induced monad structure.

\begin{definition}[Product indexed container]
  Let \(\cont{S^0}{P^0}\) and \(\cont{S^1}{P^1}\) be indexed containers.
  Their product is defined as \(\cont{S}{P}\) where
  \begin{align*}
    & S~ i := S^0~ i \x S^1~ i
    \qquad \qquad
    P\iarg{i} ~(s,s') ~j := P^0 \iarg{i} ~s ~j + P^1 \iarg{i} ~s' ~j
  \end{align*}
\end{definition}

Suppose now that the two containers come with structure
\((\e^0, \bullet^0, \Peidx^0,\upa^0,\ula^0,\ura^0) :
\ICMS(\cont{S^0}{P^0})\) and
\((\e^1, \bullet^1, \Peidx^1,\upa^1,\ula^1,\ura^1) :
\ICMS(\cont{S^1}{P^1})\). We can define
\((\e, \bullet, \Peidx, \upa,\ula,\ura) : \ICMS(\cont{S}{P})\) as follows:
\begin{align*}
  \e \iarg{i} & := \pair {\e^0 \iarg{i}} {\e^1 \iarg{i}} \\
  (s_0,s_1) \bullet s' & := (s_0 \bullet^0 \lop s', s_1 \bullet^1 \rop s') \\
  \Peidx (\inl p) & := \Peidx^0 ~p \\
  \Peidx (\inr p) & := \Peidx^1 ~p \\
  \upa & := [ \upa^0 , \upa^1 ] \\
  \ula & :=  \ula^0 + \ula^1  \\
  \ura & :=  \ura^0 + \ura^1  
\end{align*}
Here $[-,-]$ is the copairing function out of a sum type, $\lop  : P \iarg{i} ~(s_0,s_1) \toI S$ is defined as
\((\lop s') \iarg{j} := \inl{} \semi s' \iarg{j} \semi \fst{}\), and \(\rop s'\) is similar with $\inr{}$ and $\snd{}$ in place of $\inl{}$ and $\fst{}$.
This definition satisfies the equations in \Cref{def:icms} directly by
\(\beta\)-equivalence. The explicit witnesses can be found in the accompanying
formalization.

\subsection{Indexed state monad
\refagdaheading{IndexedMonad.Examples.IndexedState.html}}%
\label{sub:exmpl-state}

Given a set \(E\), the endofunctor \(\St X := E \to E \x X\) has a canonical
monad instance:
\begin{equation*}
  \label[definition]{eqn:state-monad-instance}
\begin{aligned}
  \eta_X & : X \to (E \to E \x X) \\
  \eta_X & \; x\; e := \pair e x \\
  \mu_X & : (E \to E \x (E \to E \x X)) \to (E \to E \x X) \\
  \mu_X & \; f\; e := \letin{\pair {e'} {f'} = f\; e}{f'\; e'}
\end{aligned}
\end{equation*}
This is known to functional programmers as the \emph{state monad} with state
\(E\). As described in \cite{Uustalu2017}, one can observe that:
\begin{align*}
  {} & E \to E \x X \\
  {} \simeq & (E \to E) \x (E \to X) \\
  {} \simeq & \sum_{\_ : E \to E} E \to X
\end{align*}
which is precisely the extent of the non-indexed container with $E \to E$ as type of shapes and $E$ as the type of positions in each shape.

By considering a collection of sets of states \(E : \SetI\) instead of a single set $E : \Set$, we can define an indexed variant of the state monad.
Given a family of sets \(E : \SetI\), the \emph{indexed state endofunctor} is
$\ISt{E}\; X \; i := E ~i \to \sum_{j : I} E ~j \x X ~j$. 

Conceptually, we can describe an indexed stateful computation as follows: \(I\)
is a set of \emph{modes} that our stateful computer can be in. Not all states
are available in every mode, so the set of states is replaced by a collection of sets
\(E ~i\) containing the states available in mode \(i\).
From every mode \(i\) and state \(\e : E ~i\), a stateful
computation, besides returning the resulting value from \(X ~i\), points the
computer to a new mode \(j : I\), and sets a new state in \(E ~j\), in which
the next stateful computation can be performed.

By a similar reasoning as in the non-indexed case above, we can construct the following chain of equivalence.
\begin{align*}
  {} & E ~i \to \sum_{j : I} E ~j \x X ~j
  \\
  \simeq & \sum_{m : E \, i \to I} \prod_{e : E \,i} E ~{(m\; e)} \x X ~{(m\; e)}
  \tag{\(\Sigma\)-\(\Pi\) interchange}
  \\
  \simeq & \sum_{m : E \,i \to I} \left(\prod_{e : E \,i} E ~{(m\; e)}\right) \x
  \left(\prod_{e : E \,i} X ~{(m\; e)}\right)
  \tag{Universal property of product}
  \\
  \simeq & {\sum}_{\pair m \_ : \sum_{m : E \,i \to I} \left(\prod_{e : E \,i} E \,{(m\; e)}\right)}
  \left(\prod_{e : E \,i} X ~{(m\; e)}\right)
  \tag{associativity of \(\Sigma\)}
  \\
  \simeq & {\sum}_{s : E \,i \to \sum_{j : I} E \,j}
  \left(\prod_{e : E \,i} X ~{\fst(s\; e)}\right)
  \tag{\(\Sigma\)-\(\Pi\) interchange}
  \\
  \simeq & {\sum}_{s : E \,i \to \sum_{j : I} E \,j}
  \left(\prod_{j : I} \prod_{e : E \,i} \fst(s\; e) \equiv j \to X ~j\right)
  \tag{Yoneda}
  \\
  \simeq & {\sum}_{s : E \,i \to \sum_{j : I} E \,j}
  \left(\prod_{j : I} \left(\sum_{e : E \,i} \fst(s\; e) \equiv j\right) \to X ~j\right)
  \tag{currying}
\end{align*}
The latter type is the extent of the container \(\cont SP\), with:
\begin{align*}
  & S\; i := E ~i \to \sum_{j : I} E ~j
  \qquad \qquad
  P_i\; s\; j := \sum_{e : E \,i} \fst(s\; e) \equiv j
\end{align*}

A monoid structure in \(\ICMS(\cont SP)\) can be defined as follows:
\begin{align*}
  \e_i\; e & := (i, e)
  \\
  s \bullet s' & := \lambda e. ~\letin{(j,e') = s ~e}{s' \iarg{j} ~(e,\refl) ~e'}
  \\
  \Peidx & := \snd
  \\
  \iupa{i,s,\_,j} (e, \_) & := \fst(s\; e)
  \\
  \iula{i,s,\_,j} (e, \_) & := (e, \refl)
  \\
  \iura{i,s,\_,j} (e , p) & := (\snd(s\; e), p)
\end{align*}

When \(I = \Unit\), the extent of the container becomes isomorphic to the usual
state monad endofunctor, and the same is true for the monad structure.
The associativity-related equations in \Cref{def:icms} hold definitionally,
while the unit-related ones can be proved via functional extensionality by manipulating the equalities in the hypotheses. For further
details, please refer to the formalization.

\subsection{Indexed writer monad
\refagdaheading{IndexedMonad.Examples.Writer.html}}%
\label{sub:exmpl-writer}

Given a \(\Set\)-monoid \(\triple W \varepsilon {(\cdot)}\), the endofunctor \(\Wr ~X := W \x X\) can be given a monad
structure defined by:
\begin{equation*}
  \label[definition]{eqn:writer-monad-instance}
\begin{aligned}
  \eta_X & : X \to W \x X \\
  \eta_X & \; x := \pair \varepsilon x \\
  \mu_X & : W \x (W \x X) \to W \x X \\
  \mu_X & \; \pair w {\pair {w'} x} := \pair {w \cdot w'} x
\end{aligned}
\end{equation*}
It satisfies the monad equations by virtue of \(\triple W \varepsilon {(\cdot)}\)
being a monoid. This is usually referred to as the \emph{writer} monad, and is
clearly an instance of a container monad arising from the container \(\cont W
{(\lambda \_ . \Unit)}\).
One can immediately generalise the above to
the \(\SetI\)-endofunctor \(F ~X ~i := W \x X ~i\), essentially obtaining an
\(I\)-indexed product of writer monads.

However, we can go a step further,
allowing the endofunctor to change indices in a way that suits the monoid
structure on \(W\).
Recall that an \emph{action} of \(W\) on \(I\) is a monoid homomorphism from
\(\triple W \varepsilon {(\cdot)}\) to the monoid of endofunctions \(\triple {(I \to I)} {\id_I} {\semi}\).
Let \(\mact\) be such an action and define the endofunctor
\[
  \IWr X ~i := \sum_{w : W} X ~(w \mact i)
\]
Note that when we pick the trivial action \((w \mact i := i)\), we get back the
\(I\)-indexed product of writer monads described above.
The endofunctor \(\IWr\) is isomorphic to the extent of the container \(\cont
{S}{P}\) given by:
\begin{align*}
  & S~ i := W
  \qquad \qquad
  P \iarg{i} ~w~ j := w \mact i \equiv j
\end{align*}
We can then adapt the usual writer monad structure to this heterogeneously
indexed variant in the following way:
\begin{align*}
  \e \iarg{i} & := \varepsilon{}
  \\
  w \bullet w' & := w \cdot w'
  \\
  \Peidx & := \text{(\(\mact\) preserves \(\varepsilon\))}
         & \text{(witnesses \(\varepsilon \mact i \equiv j \to i \equiv j\))}
  \\
  \iupa{i,w,\_,j}\ \_ & := w \mact i
  \\
  \iula{i,w,\_,j}\ \_ & := \refl
  \\
  \iura{i,w,\_,j}\ \_ & := \text{(\(\mact\) preserves \((\cdot{})\))}
         & \text{(witnesses \((w \cdot w') \mact i \equiv j \to w' \mact (w
         \mact i) \equiv j\))}
\end{align*}
Notice that the purpose of the set of positions \(P_i ~w ~j\) is to express dependency constraints between indices $i$ and $j$ through the action \(\mact\).
Therefore, in this \(\ICMS\), the components on shapes are the same that appear in the non-indexed case, while the components on positions witness the fact that
the monad unit and multiplication respect the indexing constraints.

Lastly, the only data-relevant equations that need to be fulfilled are
\Cref{eqn:e-unit-l,,eqn:e-unit-r,,eqn:bullet-assoc}, which are granted directly
by the monoid structure on \(W\).

\subsection{Well-scoped $\lambda$-terms
\refagdaheading{IndexedMonad.Examples.ScopedLambda.html}}%
\label{sub:exmpl-sclam}

Given an indexed container \(\cont{S^0}{P^0}\), it is possible to construct a new container \(\cont{S}{P}\) and an element of \(\ICMS(\cont{S}{P})\) such that \(\ext{\cont{S}{P}}\) is the free monad on the endofunctor \(\ext{\cont{S^0}{P^0}}\).
We do not present in the paper the general construction for building the free indexed container monoid structure on an arbitrary indexed container \(\cont{S^0}{P^0}\).
We only show a particular example which illustrates the construction clearly.

In their work on indexed containers, Altenkirch et al. 
\cite[Introduction]{Altenkirch2015} point out that a family of well-scoped
$\lambda$-terms
can be defined in dependent type theory as the initial algebra for the
endofunctor on $\Set^{\Nat}$ given by \(X \mapsto \lambda n. ~\Fin ~n + (X ~n)^2 + X ~(n+1)\), where 
\(\Fin ~n := \{0, 1, \ldots, n-1\}\).

As reported by Gambino and Hyland \cite{Gambino2004}, given a polynomial endofunctor
\(F\) on a locally Cartesian closed category, the free monad \(F^*\) on \(F\) can be constructed as the
functor taking an object \(Y\) to the initial algebra of the endofunctor
\(X \mapsto Y + F\; X\), whenever such initial algebra exists.
The resulting endofunctor \(F^*\) is also polynomial and an appropriate monad structure on $F^*$ can be canonically built.

We can perform this construction for the endofunctor $F$ on $\Set^{\Nat}$ given by
\(F ~X ~n := (X ~n)^2 + X ~(n+1)\) and obtain \(F^* ~Y ~n := \mu X.\, Y ~n + (X ~n)^2 +
X ~(n+1)\) as the carrier of the free monad on \(F\).
Instantiating $Y$ to $\Fin$ gives a collection of types \(F^* \Fin\), which is the
same family of $\lambda$-terms described above.

Notice that the functor \(F\) is equivalent to the extent of the container \(\cont{S^0}{P^0}\), with \(S^0\;n := \Bool\) and
\begin{align*}
  & P^0 \iarg{n} ~\false ~m := \Bool \times (n \equiv m)
  \\
  & P^0 \iarg{n} ~\true ~m := (n + 1 \equiv m) 
\end{align*}
The functor $F^*$ can also be equivalently presented as the extent of an indexed container \(\ext{\cont SP}\).
Shapes can be given as an indexed W-type \(S\), that we describe as an inductive type family using Agda-like syntax.
\begin{align*}
  \textcolor{Orange}\data \: & S\; (n : \textcolor{Blue}\Nat) :
  \textcolor{Blue}\Set \: \textcolor{Orange}\where \\
& \textcolor{ForestGreen}{\var} : S\; n \\
& \textcolor{ForestGreen}{\app} : S\; n \to S\; n \to S\; n \\
& \textcolor{ForestGreen}{\lam} : S\; (n + 1) \to S\; n
\end{align*}

The family of positions $P$ is defined by recursion on the first shape in \(S ~n\) as follows:
\begin{align*}
  & P \iarg{n} ~\var ~m := (n \equiv m)
  \\
  & P \iarg{n} ~(\app ~M ~N) ~m := P \iarg{n} ~M ~m + P \iarg{n} ~N ~m
  \\
  & P \iarg{n} ~(\lam ~M) ~m := P \iarg{n+1} ~M ~m 
\end{align*}

One can then verify that \(\ext{\cont SP}\) satisfies the fixpoint equation
characterizing \(F^*\). The free monad structure is given as an element of 
\(\ICMS(\cont SP)\).
The operations $\bullet$, $\upa$, $\ula$ and $\ura$ are defined by structural recursion on the first
given shape in \(S ~n\). 
The required equations are proved by structural induction. 
\begingroup
\allowdisplaybreaks
\begin{align*}
  \e \iarg{n} & := \var
  \\
  \var{} \bullet \sigma & := \sigma ~\refl
  \\
  (\app M\; N) \bullet \sigma & := \app{} (M \bullet (\inl \semi \sigma)) (N \bullet (\inr \semi \sigma)) 
  \\
  (\lam M) \bullet \sigma & := \lam{} (M \bullet \sigma)  
  \\
  \Peidx ~p& := p
  \\
  \iupa{n,\var,\sigma,m} \_  & := n
  \\
  \iupa{n,\app M\; N,\sigma,m} (\inl p)  & :=  \iupa{n,M,(\inl \semi \sigma),m} p
  \\
  \iupa{n,\app M\; N,\sigma,m} (\inr p)  & :=  \iupa{n,N,(\inr \semi \sigma),m} p
  \\
  \iupa{n,\lam M,\sigma,m} p  & :=  \iupa{n+1,M,\sigma,m} p
  \\
  \iula{n,\var,\sigma,m} \_  & := \refl
  \\
  \iula{n,\app M\; N,\sigma,m} (\inl p)  & := \inl ( \iula{n,M,(\inl \semi \sigma),m} p)
  \\
  \iula{n,\app M\; N,\sigma,m} (\inr p)  & := \inr ( \iula{n,N,(\inr \semi \sigma),m} p)
  \\
  \iula{n,\lam M,\sigma,m} p  & :=  \iula{n+1,M,\sigma,m} p
  \\
  \iura{n,\var,\sigma,m} p  & := p
  \\
  \iura{n,\app M\; N,\sigma,m} (\inl p)  & :=  \iura{n,M,(\inl \semi \sigma),m} p
  \\
  \iura{n,\app M\; N,\sigma,m} (\inr p)  & :=  \iura{n,N,(\inr \semi \sigma),m} p
  \\
  \iura{n,\lam M,\sigma,m} p  & :=  \iura{n+1,M,\sigma,m} p
\end{align*}
\endgroup

We remark that, while $F^*$ is a monad on $\Set^{\Nat}$, the functor
$F^* \Fin : \Nat \to \Set $ is a monad relative to the functor
$\Fin : \Nat \to \Set$ and this latter relative monad structure is
more fundamental for well-scoped $\lambda$-terms that the former monad
structure. In particular, substitution of well-scoped $\lambda$-terms
is described by the multiplication of $F^* \Fin$, not the
multiplication of $F^*$.

\section{Conclusions}%
\label{sec:Conclusions}

In this paper we performed a combinatorial analysis of monad structures on  the extent of indexed containers, extending previous work by Uustalu on the non-indexed case.
The main technical contribution is a combinatorial characterization of monoid structures in the monoidal category of indexed containers, with the identity container and container composition as monoidal unit and tensor.
With this characterization, we generate many examples of interest in dependently-typed functional programming, such as indexed extensions of the state and the writer monad and a free monad producing the type of well-scoped $\lambda$-terms.
Our combinatorial analysis should serve as a tool for easing the construction of monad structures on certain endofunctors, for disproving that a candidate monad structure can possibly work, or for enumerative purposes.

Uustalu \cite{Uustalu2017} also discusses conditions ensuring that the resulting monads are Cartesian. We should be able to easily generalize these conditions to the indexed case.

When performing the formalization, we noticed that although our theoretical setting is based on \(\Set\)-families, many of the constructions presented in the paper work for types with arbitrary h-level, not just sets. In particular, there is no global assumption of Uniqueness of Identity Proofs.
The restriction to sets is necessary in the proof of \Cref{lem:ext-ff} (it is more generally needed to show that indexed containers form a category, where homtypes are required to be sets) and also employed in the formalization of examples.
In future work, we intend to extend our analysis to higher-categorical dimensions, moving from categories of (indexed) \emph{sets} to higher categories of (indexed) \emph{higher groupoids}.
The case of pseudofunctors and pseudomonads on 1-groupoids, a.k.a. types with h-level 1 in the sense of homotopy type theory \cite{UnivalentFoundationsProgram2013}, is already of interest.
Kock \cite{Kock2012} showed that polynomial functors on groupoids capture datatypes with symmetries, which are closely connected to analytic functors and Joyal's combinatorial species \cite{Joyal1986}.
\Eg\ a groupoid of finite multisets can be defined as the extent of a non-indexed container with shapes given by the ``groupoid of finite types'' and positions at a finite type given by its cardinality \cite{Finster21,Joram2023}.

A different avenue of future works concerns the use of indexed containers in the syntax and semantics of higher inductive types in HoTT.
Van der Weide and Geuvers \cite{Weide2019} introduce a schema for the specification of finitary set-truncated HITs, which are semantically interpreted as set-quotients of some initial algebras, and analogous studies exist for finitary 1-truncated HITs \cite{DybjerM18,VeltriW21}.
The expressivity of these schemata could be abundantly enhanced by moving to a notion of signature based on indexed containers.

\paragraph{Acknowledgments}
We are grateful to our reviewers for the numerous helpful comments they made. M.D.P.\ and N.V.\ were supported by the Estonian Research Council grant no.~PSG749. T.U.\ was supported by the Estonian Research Countil grant no.~PRG1210.

\bibliography{biblio}

\end{document}